\documentclass[9pt,shortpaper,twoside,web]{ieeecolor}
\usepackage{generic}
\overrideIEEEmargins

\usepackage{textcomp}
\def\BibTeX{{\rm B\kern-.05em{\sc i\kern-.025em b}\kern-.08em
    T\kern-.1667em\lower.7ex\hbox{E}\kern-.125emX}}
%==================================
\usepackage[noadjust]{cite}
% packages
\usepackage{amsmath,amssymb,amsfonts}
\usepackage{bigints}
\usepackage{abraces}
\usepackage{mathrsfs}
\usepackage{color}
\usepackage{graphicx}
\usepackage{epsfig} % for postscript graphics files
\usepackage{hyperref}
\usepackage{subcaption}
\usepackage{caption}

\usepackage{array,booktabs} % For nice tables
\usepackage{amsmath,amssymb} % For nice maths
\usepackage{stfloats} % For putting double column floats on bottom of page
\usepackage{psfrag} % For putting double column floats on bottom of page
\usepackage{color}

\definecolor{box_color}{rgb}{.8,.8,.8}
 % Counter for temporarily saving equation number

\usepackage[utf8]{inputenc}
\usepackage[T1]{fontenc}
\usepackage{epstopdf}

%[section]

\newtheorem{proposition}{Proposition}
\newtheorem{corollary}{Corollary}
\newtheorem{fact}{Fact}
\newtheorem{definition}{Definition}
\newtheorem{remark}{Remark}
\newtheorem{example}{Example}

%%%%%%%%%%%%%%%%%%

\usepackage{tikz,xcolor,hyperref}

\definecolor{lime}{HTML}{A6CE39}
\DeclareRobustCommand{\orcidicon}{
	\begin{tikzpicture}
	\draw[lime, fill=lime] (0,0) 
	circle [radius=0.16] 
	node[white] {{\fontfamily{qag}\selectfont \tiny ID}};
	\draw[white, fill=white] (-0.0625,0.095) 
	circle [radius=0.007];
	\end{tikzpicture}
	\hspace{-2mm}
}

\foreach \x in {A, ..., Z}{\expandafter\xdef\csname
orcid\x\endcsname{\noexpand\href{https://orcid.org/\csname
orcidauthor\x\endcsname}
			{\noexpand\orcidicon}}
}

%%%%%%%%%%%%%%%%%%%%%%%
\def\begquo{\begin{quote}}
\def\endquo{\end{quote}}
\def\begequarr{\begin{eqnarray}}
\def\endequarr{\end{eqnarray}}
\def\begequarrs{\begin{eqnarray*}}
\def\endequarrs{\end{eqnarray*}}
\def\begarr{\begin{array}}
\def\endarr{\end{array}}
\def\begequ{\begin{equation}}
\def\endequ{\end{equation}}
\def\lab{\label}
\def\begdes{\begin{description}}
\def\enddes{\end{description}}
\def\begenu{\begin{enumerate}}
\def\begite{\begin{itemize}}
\def\endite{\end{itemize}}
\def\endenu{\end{enumerate}}

\def\lef[{\left[\begin{array}}
\def\rig]{\end{array}\right]}
\def\begcen{\begin{center}}
\def\endcen{\end{center}}
\def\begfac{\begin{fact}}
\def\endfac{\end{fact}}
\def\begsubequ{\begin{subequations}}
\def\endsubequ{\end{subequations}}

%%%%%%
%%%%%%%%%%%%%%%%%%%%%%

\def\begmat#1{\begin{bmatrix}#1\end{bmatrix}}

%%%%%%%%%%%%%%%%%%%%%%%%%%%%%%%%%%%%%%%%%%%%%

%%%%%%%%%%%%%%%%%%%%%%%
\def\vex{\mbox{vex}}
\def\skew{\mbox{skew}}

\def\cale{{\cal E}}

\def\calj{{\cal J}}

\def\cale{{\cal E}}

\def\calj{{\cal J}}

%%%%%%%%%%%%%%%%%%%%%%%%%%%%

%%%%%%%%%%%%%%%%%%%%%%%%%%%%

\def\tr{\mbox{tr}}

\def\L2e{{\cal L}_{2e}}

\def\rea{\mathbb{R}}

\def\diag{\mbox{diag}}

\def\col{\mbox{col}}
\def\hal{{1 \over 2}}
\def\et{\varepsilon_t}
\def\diag{\mbox{diag}}

%%%%%%%%%%%%%%%%%%%%%%%%%%%%%%
%%%%%%%%%%%%%%%%%%%%%

%%%%%
\usepackage{color}

%%%%%

%%%%----------- For comments ------------
\usepackage[prependcaption,colorinlistoftodos]{todonotes}

%%%%%%%%%%%%%%%%%%
%%%%%%%%%%%%%%%%%%%%%%%
%********************************
\input epsf

\def\BibTeX{{\rm B\kern-.05em{\sc i\kern-.025em b}\kern-.08em
    T\kern-.1667em\lower.7ex\hbox{E}\kern-.125emX}}
\begin{document}

\title{Attitude Estimation from Vector Measurements: Necessary and Sufficient
Conditions and Convergent Observer Design}
\author{Bowen Yi\orcidA{}, Lei Wang\orcidC, and Ian R. Manchester\orcidB
%\thanks{This paragraph of the first footnote will contain the date on  }
%
\thanks{This work was supported by Australian Research Council.}
\thanks{B. Yi and I. R. Manchester are with the Australian Centre for Field Robotics, and Sydney Institute for Robotics and Intelligent Systems, The University of
Sydney, Sydney, NSW 2006, Australia (email:
\texttt{bowen.yi,~ian.manchester@sydney.edu.au})
}
\thanks{L. Wang is with State Key Laboratory of Industrial Control Technology,
Institute of Cyber-Systems and Control, Zhejiang University, Hangzhou 310027,
China (email: \texttt{lei.wangzju@zju.edu.cn}, corresponding author)}
}

\maketitle
\thispagestyle{empty}

\begin{abstract}
The paper addresses the problem of attitude estimation for rigid bodies using (possibly time-varying) vector measurements, for which we provide a \emph{necessary and sufficient} condition of distinguishability. Such a condition is shown to be strictly weaker than those previously used for attitude observer design. Thereafter, we show that even for the single vector case the resulting condition is sufficient to design almost globally convergent attitude observers, and two explicit designs are obtained. To overcome the weak excitation issue, the first design employs to make full use of historical information, whereas the second scheme dynamically generates a virtual reference vector, which remains non-collinear to the given vector measurement. Simulation results illustrate the accurate estimation despite noisy measurements. 
\end{abstract}

\begin{IEEEkeywords}
Nonlinear system, observer design, observability, attitude estimation
\end{IEEEkeywords}

%%%%%%%%%%%%%%%%%%%%%
\section{Introduction}
\label{sec1}
%%%%%%%%%%%%%%%%%%%%%

The attitude of a rigid body is its orientation with respect to an inertial reference frame. Attitude estimation is an essential element in a wide range
of robotics and aerospace applications, in particular control,
navigation, and localization tasks. Many common sensor types, e.g. magnetometers, accelerometers, and monocular cameras, provide body-fixed measurements of quantities with a known inertial value, e.g. the earth's magnetic field and gravitational force, or the bearing to certain known landmarks. These are known as \textit{complementary} measurements \cite{TRUetal}. In some less common scenarios a set of known vectors in the body-fixed frame are measured in the inertial frame, e.g. measurements from two GPS receivers attached  to the rigid body  with a  known base-line. These  are  known  as \textit{compatible} measurements \cite{TRUetal}.

Estimation of attitude from  multiple non-collinear vector measurements was formulated as a total least-squares problem over rotation matrices by Wahba \cite{WAH}. Several efficient algorithms exist for its solution, including singular value decomposition methods, TRIAD, and QUEST \cite{SHUOH}. 

However, when estimating a time-varying attitude it often is beneficial to fuse the vector measurements with information from gyroscopes using a dynamical model. The resulting dynamic estimator is commonly known as a filter or observer. These approaches can significantly reduce the impact of high-frequency measurement noise. Furthermore, in many applications there is only a single vector available for attitude estimation and in this case the attitude is not completely determined at a single moment. Applications for estimation from a single vector measurement include Sun sensors in eclipse periods \cite{NAMSAF}, improving reliability with redundant measurements and simplifying  designs \cite{REIetal}, as well visual-inertial navigation with only two feature
points visible in some periods.

Among  filtering approaches, extended Kalman filter is the most widely-applied for attitude estimation. However the domain of attraction is intrinsically local since the filter is based on first-order linearization; see \cite{CRAetal} for a recent review. Alternatively,  interest in nonlinear attitude observers was spurred by Salcudean's seminal work \cite{SAL}, and has achieved significant progress since then. There are many nonlinear attitude observers making direct use of vector measurements, e.g., with multiple measurements \cite{MAHetal08,TRUetal,ZLOFOR} or single vector measurements \cite{BATetal,GRIetalTAC,BAHNAM,KINWHI}. The latter works impose a persistently non-constant condition on the single reference vector, or similar conditions in which the \emph{uniformity} of excitation with respect to time plays an essential role to guarantee asymptotic convergence. In \cite{TRUetal}, the authors provide a comprehensive treatment of observability of a rigid-body attitude kinematic model with vectorial outputs. However, as illustrated in \cite[Remark 3.9]{TRUetal}, the condition is only sufficient but \emph{not} necessary for distinguishability, a specific type of observability for nonlinear dynamical systems \cite{BES,BER}. In this paper, we revisit the problem of observability analysis and propose two novel attitude observers. To be precise, the main contributions of the note are two-fold:

\begin{itemize}
\item[\bf C1] For the problem of attitude estimation from vector measurements, we provide the \emph{necessary and sufficient} condition to distinguishability of the associated dynamical model, which is known as the necessity to reconstruct attitude over time in any deterministic estimators;

\item[\bf C2] We show that the resulting distinguishability condition is also sufficient to design a continuous-time attitude observer. By focusing on single vector measurements, we provide two novel almost globally convergent attitude observers, which  require significantly weaker conditions than existing methods.
\end{itemize}

The constructive tool we adopt in observer design is the parameter
estimation-based observer (PEBO), which was recently proposed in Euclidean
space \cite{ORTetalscl,ORTetalAUT}, and extended to matrix Lie groups by
the authors in \cite{YIJIN,YIetalCDC}. Its basic idea is to translate system
state observation into the estimation of certain constant quantities. The interested reader may refer to 
\cite{YIetalTAC19} for the geometric interpretation to PEBOs. In contrast to the case with at least two non-collinear vectors in \cite{YIJIN,YIetalCDC}, in this paper we consider a more challenging scenario with only a single vector measurement available under a weak excitation condition. We are unaware of any previous results capable to deal with such a case. In the first observer design, after translating the problem into on-line parameter identification, we propose a mechanism to integrate both the historical and current information to achieve uniform convergence. The second proposed scheme uses a filter to generate a ``virtually'' measurable vector, which remains non-collinear with respect to the given reference vector.

{\em Notation.} $I_n \in \rea^{n\times n}$ represents the identity
matrix of dimension $n$, and  $0_n \in \rea^n$ and $0_{n\times m}\in
\rea^{n\times m}$ denote the zero column vector of dimension $n$ and the zero matrix of dimension $n\times m$, respectively. We use $\mathbb{N}$ to represent the set of all natural integers, and $\mathbb{N}_+$ for the set of positive integers. We also define the skew-symmetric matrix 
$$
\calj:= \begmat{0 & -1 \\ 1 & 0}.
$$
Given a square matrix $A\in \rea^{n\times n}$ and a vector $x\in \rea^n$, the Frobenius norm is defined as $\|A\| = \sqrt{\tr(A^\top A)}$, and $|x|$ represents the standard Euclidean norm. The $n$-sphere is
defined as $\mathbb{S}^n := \{x\in \rea^{n+1}: |x|=1\}$, and we use $SO(3)$ to
represent the special orthogonal group, and ${\mathfrak {so}}(3)$ is the
associated Lie algebra as the set of skew-symmetric matrices satisfying
$SO(3)=\{R\in \rea^{3\times3}|R^\top R = I_3, ~ \det(R) =1\}$. Given a variable $R\in SO(3)$, we use $|R|_I$ to represent the normalized distance to $I_3$ on $SO(3)$ with $|R|_I^2:= {1\over4}\tr(I_3 - R) $. The operator $\skew(\cdot)$ is defined as $\skew(A) := \hal (A- A^\top)$ for a square matrix $A$. Given $a \in \rea^3$, we define the operator $(\cdot)_\times$ as 
$$
a_\times := \begmat{ 0 & - a_3 & a_2 \\  a_3 & 0 & -a_1 \\ -a_2 & a_1 & 0 } \in
{\mathfrak {so}}(3),
$$
and its inverse operator is defined as $\vex(a_\times) = a$.

The paper is organized as follows. In Section \ref{sec2}, we recall the dynamical models and some basic definitions used in the paper. It is followed by the necessary and sufficient condition for observability in Section \ref{sec3}. Based on the proposed condition, we introduce two nonlinear attitude observer design in Section \ref{sec4}, the simulation results of which are presented in Section \ref{sec6}. Some discussions with practical considerations may be found in Section \ref{sec5}. The paper is wrapped up by some concluding remarks in Section \ref{sec7}. A preliminary version of this paper has been submitted to the 2022 IFAC Symposium on Nonlinear Control Systems (NOLCOS).

%%%%%%%%%%%%%%%%%%%%%
\section{Problem Formulation}
\label{sec2}
%%%%%%%%%%%%%%%%%%%%%

The aim of this note is to study the observability and observer design of the rotation matrix representing the coordinates of the body-fixed frame $\{B\}$ with respect to the coordinates of the inertial frame $\{I\}$, which lives in the group $SO(3)$. Its dynamics is given by
\begin{equation}
\label{dot:R}
\dot R = R \omega_\times, \quad R(0) = R_0
\end{equation}
with the rotational velocity $\omega \in \rea^3$ measured in the body-fixed
coordinate. Assume there is a vector $g\in \mathbb{S}^2$, known in the inertial
frame, being measured in the body-fixed frame, and the output is 
\begin{equation}
\label{yb}
 y_{\tt B} = R^\top g
\end{equation}
with $y_{\tt B} \in \mathbb{S}^2$, which is known as complementary measurement.
We also consider the compatible measurement $y_{\tt I}$, i.e., a known vector
$b \in \mathbb{S}^2$ in the body-fixed frame is measured in the inertial frame

\begequ
\label{yi}
y_{\tt I} = R b
\endequ
with $y_{\tt I} \in \mathbb{S}^2$. It is referred to \cite[Sec. II]{TRUetal}
for more details about  ``complementary'' and ``compatible'' measurements and their practical motivation. 

Before closing this section, let us recall some definitions used throughout the
paper.

\begin{definition}\label{def:distinguishability}\rm (\emph{Distinguishability \cite{BER}}) Consider an open
set $\mathcal{X}\subset\rea^n$ and a complete nonlinear system
\begin{equation}
\label{NL}
\dot x = f(x,t) , \quad y = h(x,t)
\end{equation}
with state $x\in\rea^n$ and output $y\in \rea^m$. The system \eqref{NL} is
distinguishable on $\cal X$ if for all $(x_a,x_b) \in \mathcal{X} \times
\mathcal{X}$,
$$
h\big(X(t; t_0,x_a),t\big)  =  h\big(X(t;t_0,x_b),t\big), ~\forall t \ge t_0 \implies 
x_a = x_b,
$$
in which $X(t;t_0,x_a)$ represents the solution at time $t$ of \eqref{NL}
through $x_0$ at time $t_0$. In this paper, we focus on the particular case
$t_0=0$.
\end{definition}

\begin{definition}\label{def1}\rm
(\emph{Persistent and interval excitation \cite{ORTetalReview}}) Given a
bounded signal $\phi:\rea_+ \to \rea^n$, it is persistently excited (PE) if
$$
     \int_{t}^{t+T} \phi(s)\phi^\top(s) ds \ge \delta I_n, ~ \forall t\ge 0
$$
    for some $T>0,\delta >0$; or intervally excited (IE), if there exists $T\ge
0$ such that  for some $\delta >0$
$$
    \int_{0}^{T} \phi(s)\phi^\top(s) ds \ge \delta I_n.
$$
\end{definition}

%
%%%%%%%%%%%%%%
\section{Necessary and Sufficient Conditions to Observability}
\label{sec3}
%%%%%%%%%%%%%%
%

First, we consider the observability for the case with multiple measurements 
\begequ
\label{yij}
\begin{aligned}
y_{{\tt B},i} & ~=~ R^\top g_{i}, \quad &i \in \ell_1 := \{ 1,\ldots,n_{1}\}
\\
y_{{\tt I},j} & ~=~ R b_{j}, \quad & j \in \ell_2:=\{ 1,\ldots, n_2\}
\end{aligned}
\endequ
with $n_1,n_2 \in \mathbb{N}$.\footnote{If $n_i=0$ ($i=1,2$), then the set $\ell_i$ is defined as the empty set $\emptyset$.} It is clear that the single measurement is corresponding to the case $n_1+n_2 =1$, for which we will construct two asymptotically convergent observers in the next section.

In the following proposition, we uncover a necessary and sufficient condition to the distinguishability for attitude estimation. 

\begin{proposition}
\rm\label{prop:cond}
The time-varying system \eqref{dot:R} with the output \eqref{yij}, and $n := n_1+n_2 \ge 1$, is distinguishable if and only if there exist two moments $t_1, t_2 \ge 0$ such
that
\begin{equation}
\label{cond:1}
\begin{aligned}
  \sum_{\substack{i,l \in \ell_1,  j,k \in \ell_2}} 
  \Big| g_i(t_1)\times g_l(t_2)\Big| 
  & ~+~
  \Big|g_i(t_1)_\times R_0\Phi(0,t_2) b_j(t_2)  \Big|
  \\
  & ~ +~ 
  \Big| b_j(t_1)_\times \Phi(t_1,t_2) b_k(t_2) \Big| >0,
\end{aligned}
\end{equation}
in which $\Phi(s,t)$ is the state transition matrix of the time-varying
system matrix $\omega_\times(t)$ from $s$ to $t$.
\end{proposition}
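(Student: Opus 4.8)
The plan is to translate Definition~\ref{def:distinguishability}, applied to \eqref{dot:R}--\eqref{yij}, into an elementary linear-algebraic statement about the two candidate initial attitudes, and then to recognise the outcome as exactly the non-collinearity of the family of inertial-frame vectors hidden inside \eqref{cond:1}. \textbf{Step 1 (reduce to initial conditions).} Let $R_a(\cdot),R_b(\cdot)$ solve \eqref{dot:R} from $R_a(0),R_b(0)$ under the same $\omega$. Since $\omega_\times(t)$ is skew-symmetric, $\tfrac{d}{dt}\big(\Phi(0,t)\Phi(0,t)^{\!\top}\big)=0$, so the transition matrix satisfies $\Phi(0,t)\in SO(3)$ and the flow of \eqref{dot:R} is right translation, $R_a(t)=R_a(0)\Phi(0,t)$, $R_b(t)=R_b(0)\Phi(0,t)$. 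Substituting into \eqref{yij} and cancelling the invertible factors $\Phi(0,t)^{\!\top}$ and $\Phi(0,t)$, equality of all outputs for all $t\ge0$ is equivalent to
$$
\big(R_a(0)-R_b(0)\big)^{\!\top}g_i(t)=0,\qquad \big(R_a(0)-R_b(0)\big)\Phi(0,t)b_j(t)=0
$$
for every $i\in\ell_1$, $j\in\ell_2$ and $t\ge0$.

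\textbf{Step 2 (kernels of a difference of rotations).} I will use that a rotation $Q\in SO(3)$ with $Q\neq I$ has fixed subspace \emph{exactly} one-dimensional, the span of its axis (eigenvalues $1,e^{\pm\mathrm{i}\theta}$, $\theta\neq0$). With $A=R_a(0)$, $B=R_b(0)$, $A\neq B$: one has $v\in\ker(A-B)\setminus\{0\}\iff Av=Bv\iff (B^{\!\top}A)v=v$, and $w\in\ker(A-B)^{\!\top}\setminus\{0\}\iff(AB^{\!\top})w=w$. Hence $\ker(A-B)^{\!\top}$ is the axis line of $AB^{\!\top}\neq I$ and $\ker(A-B)$ is the axis line of $B^{\!\top}A\neq I$, and the conjugacy $B^{\!\top}A=B^{\!\top}(AB^{\!\top})B$ gives $\mathrm{axis}(B^{\!\top}A)=B^{\!\top}\,\mathrm{axis}(AB^{\!\top})$. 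Writing $B=R_0$ (the initial condition of \eqref{dot:R}) and $a:=\mathrm{axis}(AB^{\!\top})$, the conditions of Step~1 become: every $g_i(t)$ is collinear with $a$, and every $\Phi(0,t)b_j(t)$ is collinear with $R_0^{\!\top}a$, i.e.\ every $R_0\Phi(0,t)b_j(t)$ is collinear with $a$.

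\textbf{Step 3 (assemble and rewrite as \eqref{cond:1}).} It then follows that some $R_a(0)\neq R_0$ is indistinguishable from $R_0$ iff there is a unit vector $a$ collinear with every member of ${\cal V}:=\{\,g_i(t):i\in\ell_1,\,t\ge0\,\}\cup\{\,R_0\Phi(0,t)b_j(t):j\in\ell_2,\,t\ge0\,\}$; for the converse I take $A:=R_0Q$ with $Q$ any nontrivial rotation about $R_0^{\!\top}a$, which is $\neq R_0$ and realises $\mathrm{axis}(AB^{\!\top})=a$. Since $n=n_1+n_2\ge1$ makes ${\cal V}$ nonempty, the system is distinguishable iff ${\cal V}$ lies in no line, i.e.\ iff one of $g_i(t_1)\times g_l(t_2)$, $g_i(t_1)\times R_0\Phi(0,t_2)b_j(t_2)$, $R_0\Phi(0,t_1)b_j(t_1)\times R_0\Phi(0,t_2)b_k(t_2)$ is nonzero for some indices and $t_1,t_2$ (pairwise collinearity of nonzero vectors being the same as lying in one line). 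Finally $\Phi(0,t_1)\Phi(t_1,t_2)=\Phi(0,t_2)$, the orthogonality of $\Phi(0,t_1),R_0$, and $R(u\times v)=(Ru)\times(Rv)$ yield $\big|R_0\Phi(0,t_1)b_j(t_1)\times R_0\Phi(0,t_2)b_k(t_2)\big|=\big|b_j(t_1)_\times\Phi(t_1,t_2)b_k(t_2)\big|$, which turns this disjunction into the positivity of the sum in \eqref{cond:1}.

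\textbf{Expected main obstacle.} The delicate step is Step~2: the ``indistinguishable directions'' for the complementary and the compatible outputs live in \emph{different} frames — the inertial frame for the $g_i$'s and the time-$0$ body frame for the $b_j$'s — and are tied to a single axis only after conjugation by $R_0$, which is exactly why $R_0$ (and not $I$) appears in the middle term of \eqref{cond:1}. Keeping the transposes and conjugations straight, and checking the degenerate index sets ($n_1=0$ or $n_2=0$) and the single-vector case $n=1$ — where \eqref{cond:1} reduces to the statement that the lone reference vector changes direction at least once — is where the care lies; the rest is just the $SO(3)$ eigenstructure together with orthogonality of $\Phi(0,t)$.
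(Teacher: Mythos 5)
Your proof is correct, and it reaches the paper's characterization by a genuinely different route. The paper first builds the dynamic extension $Q$ with $\dot Q=Q\omega_\times$, rewrites the output history as a constant-rotation regression $Y(t)=R_0^\top\phi(t)$ with $\phi=[g_1,\dots,g_{n_1},y_{{\tt I},1},\dots,y_{{\tt I},n_2}]$, and then \emph{cites} the uniqueness theory of the Wahba problem to conclude that $R_0$ is identifiable iff two columns of $\phi$ at two instants are non-collinear (condition \eqref{cond:phi}), before splitting into the three index cases. You instead work directly with two candidate initial attitudes and observe that $\ker(A-B)$ and $\ker(A-B)^\top$ are the (one-dimensional) axes of $B^\top A$ and $AB^\top$; this is in effect a self-contained, first-principles proof of the Wahba-uniqueness fact the paper takes off the shelf, and your explicit construction $A=R_0Q$ with $Q$ a rotation about $R_0^\top a$ supplies the "only if" direction (non-uniqueness under collinearity) that the paper leaves implicit. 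The frame bookkeeping you flag as the delicate point — the $g_i$'s constraining the axis of $AB^\top$ while the $\Phi(0,t)b_j$'s constrain the conjugated axis of $B^\top A$ — is exactly why $R_0$ appears in the mixed term, and you handle it correctly. One shared subtlety worth being aware of: your Step 3, like the paper's reduction to identifiability of $R_0$ from the measured data, only rules out partners indistinguishable from the \emph{true} initial condition $R_0$, not arbitrary indistinguishable pairs $(R_a,R_b)$ both different from $R_0$; since \eqref{cond:1} explicitly involves $R_0$ this trajectory-wise reading is clearly the intended one, so this is a feature of the statement rather than a gap in your argument — but it is the reason your Steps 1–2, which are valid for arbitrary pairs, must be specialized to $B=R_0$ before the conclusion is drawn.
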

%=======
\begin{proof}
The state transition matrix $\Phi(s,t)$ of the linear time-varying (LTV) system
$$
\dot x = \omega_\times x 
$$
with $x\in \rea^3$ is defined as
$$
\begin{aligned}
{\partial \over \partial t}\Phi(s,t) & ~=~ \omega_\times(t)\Phi(s,t)
\\
\Phi(s,s) & ~=~ I_3.
\end{aligned}
$$
It is equivalent to define $\Phi(s,t) = Q(s)^{-1} Q(t)$, in which $Q \in SO(3)$
is generated by the dynamics
\begin{equation}
\label{dot:Q}
\dot Q = Q \omega_\times, \quad Q(0) = I_3
\end{equation}
with $Q\in SO(3)$. From
$$
\dot{\aoverbrace[L1R]{R Q^\top}} ~= \dot R Q^\top - R Q^\top \dot Q Q^\top = 0,
$$
we have for all $t,s\ge 0$
$$
\begin{aligned}
R(t) Q(t)^\top = R(0) Q(0)^\top
& ~ \iff ~
R(t) ~=~ R_0 Q(t)
\\
& ~ \iff ~
R(t) ~=~ R(s) \Phi(s,t),
\end{aligned}
$$
with $R_0 := R(0)$.

Now we collect all the measured outputs in the vector 
$$
\bar y = \col(y_{{\tt B},1}, \ldots, y_{{\tt B},n_1}, y_{{\tt I},1}, \ldots,
y_{{\tt I},n_2}). 
$$
With a slight abuse of notation, we denote the output signal $\bar y$ from the
initial condition $R_0\in SO(3)$ as $\bar y(t; R_0)$. In terms of Definition \ref{def:distinguishability}, the system is distinguishable from $t=0$
if and only if 
\begequ
\label{impl:1}
\bar y (t; R_a) \not \equiv \bar y (t; R_b) 
\quad \implies \quad
R_a \neq R_b
\endequ
for any $R_a, R_b \in SO(3)$. Clearly, the above condition \eqref{impl:1} is
equivalent to the \emph{identifiability} of the constant matrix $R_0 \in SO(3)$ from the
nonlinear regressor equation
\begequ
\label{NLRE1}
\bar y = h(R_0, t)
\endequ
with the equation
$$
h(R_0,t) := \begmat{Q^\top (t) R_0^\top g_1(t)\\ \vdots \\ Q^\top(t) R_0^\top
g_{n_1}(t)
\\
R_0 Q(t) b_1(t) \\ \vdots \\ R_0 Q(t) b_{n_2}(t)
}.
$$
The regressor equation \eqref{NLRE1} can be equivalently rewritten as
\begequ
\label{NLRE2}
Y(t) =  R_0^\top \phi(t), \quad R_0\in SO(3)
\endequ
with $Y\in \rea^{3\times n}$ and $\phi \in \rea^{3\times n}$ given by
$$
\begin{aligned}
Y & ~:=~  Q\begmat{y_{{\tt B},1},\ldots, y_{{\tt B},n_1}, b_1,\ldots, b_{n_2}}
\\
\phi & ~:=~  \begmat{g_1, \ldots, g_{n_1}, y_{{\tt I},1}, \ldots, y_{{\tt
I},n_2}}.
\end{aligned}
$$
Hence, the identifiability of the constant matrix $R_0$ on $SO(3)$ from the nonlinear regression model \eqref{NLRE1} is equivalent to the solvability of the Wahba problem for the regression model \eqref{NLRE2} over time \cite{WAH} -- invoking that \eqref{NLRE2} holds for all $t \ge 0$. That is the existence
of moments $t_1,t_2\ge 0$ such that
\begequ
\label{cond:phi}
\phi_i(t_1) \times \phi_j(t_2)  \neq 0
\endequ
for some $i,j \in \{1,\ldots, n\}$, with $\phi_i$ representing the $i$-th
column vector of $\phi$.

The last step of the proof is to show that the condition \eqref{cond:phi} is
equivalent to \eqref{cond:1}. There are totally three possible cases when
\eqref{cond:phi} holds true: 1) $i,j \in \{1,\ldots, n_1\}$, 2) $i,j \in \{n_1+1,\ldots,
n\}$, and $i\in \{1,\ldots, n_1\}, j\in \{n_1+1,\ldots, n\}$.\footnote{We do
not distinguish the order of $i$ and $j$.} For the first case, the condition
\eqref{cond:phi} is equivalent to
\begequ
\label{case:1}
\sum_{i, l \in \ell_1} \Big| g_i(t_1) \times g_l (t_2)\Big|>0.
\endequ
The second case is equivalent to for some $j, k \in \ell_2$
\begequ
\label{case:2-1}
\begin{aligned}
 & \quad y_{{\tt I},j}(t_1) \times y_{{\tt I},k} (t_2) \neq 0
 \\
 \iff & \quad [R(t_1) b_j(t_1)]_\times R(t_2) b_k(t_2) \neq 0
 \\
 \iff & \quad R(t_1) [ b_j(t_1)]_\times R(t_1)^\top R(t_2) b_k(t_2) \neq 0
 \\
  \iff & \quad [ b_j(t_1)]_\times R(t_1)^\top R(t_2) b_k(t_2) \neq 0
  \\
   \iff & \quad [ b_j(t_1)]_\times \Phi(t_1,t_2) b_k(t_2) \neq 0
\end{aligned}
\endequ
where in the second implication we use the identity $(Rb)_\times = R b_\times
R^\top$, the full-rankness of $R(t_1)$ in the third implication, and in the
last
$$
\begin{aligned}
R(t_1)^\top R(t_2) & ~=~ Q(t_1)^\top R_0^\top R_0 Q(t_2) 
\\
& ~=~ \Phi(t_1, t_2).
\end{aligned}
$$
Note that the last line of the condition \eqref{case:2-1} can be compactly written as
\begin{equation}
\label{case:2}
\sum_{j,k \in \ell_2} \Big| b_j(t_1)_\times \Phi(t_1,t_2) b_k(t_2) \Big| >0.
\end{equation}
Similarly, we get that for the third case the condition \eqref{cond:phi} is
equivalent to
\begin{equation}
\label{case:3}
\sum_{i\in \ell_1, ~j \in \ell_2} \Big|g_i(t_1)_\times R_0\Phi(0,t_2) b_j(t_2) 
\Big|>0.
\end{equation}
Combining these three cases, it is sufficient to obtain \eqref{cond:1}. On the other hand, since each term in \eqref{cond:1} is non-negative, if the condition \eqref{cond:1} holds, at least one of the above cases should be satisfied. We complete the proof.
\end{proof}

For the case with \emph{only} complementary or compatible measurements
$(n_1\cdot n_2 =0)$, then the distinguishability condition becomes 
$$
  \sum_{\substack{i,l \in \ell_1,  j,k \in \ell_2}} 
  \Big| g_i(t_1)\times g_l(t_2)\Big| 
   +
  \Big| b_j(t_1)_\times \Phi(t_1,t_2) b_k(t_2) \Big| >0.
$$
If there are two types of measurements, the identifiability is dependent of the
initial attitude $R_0$, and this implies that some region in $SO(3)$ may be not
distinguishable for a given specific trajectory. However, the following corollary shows that such a region has zero Lebesgue measure in the group $SO(3)$. Note that the condition below does not rely on the initial attitude $R_0$.

\begin{corollary}\rm
\label{cor:modified}
If the condition \eqref{cond:1} is replaced by the initial
attitude-independent term
\begin{equation}
\label{cond:modified}
\begin{aligned}
  \sum_{\substack{i,l \in \ell_1,  j,k \in \ell_2}} 
  \Big| g_i(t_1)\times g_l(t_2)\Big| 
  & ~+~
  \Big|g_i(t_1)_\times \Phi(0,t_2) b_j(t_2)  \Big|
  \\
  & ~ +~ 
  \Big| b_j(t_1)_\times \Phi(t_1,t_2) b_k(t_2) \Big| >0,
\end{aligned}
\end{equation}
the distinguishability is guaranteed almost surely.\footnote{We refer to that the set of initial attitudes making the system lose distinguishability has zero Lebesgue measure in the entire state space.}
\end{corollary}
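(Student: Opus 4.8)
The plan is to read Corollary~\ref{cor:modified} off directly from Proposition~\ref{prop:cond}. Recall that, by that proposition, the system is distinguishable precisely when the sum in \eqref{cond:1} is strictly positive at \emph{some} pair of moments $(t_1,t_2)$, and that within that sum the only term carrying a dependence on the initial attitude $R_0$ is the mixed term $\big|g_i(t_1)_\times R_0\Phi(0,t_2)b_j(t_2)\big|$; condition \eqref{cond:modified} is exactly \eqref{cond:1} with the factor $R_0$ deleted from this mixed term. The first step is therefore a case split according to which family of terms makes \eqref{cond:modified} positive. If \eqref{cond:modified} is made positive at some $(t_1^\star,t_2^\star)$ by one of the two $R_0$-independent families --- either $\big|g_i(t_1^\star)\times g_l(t_2^\star)\big|>0$ or $\big|b_j(t_1^\star)_\times\Phi(t_1^\star,t_2^\star)b_k(t_2^\star)\big|>0$ --- then the very same term appears unchanged in \eqref{cond:1}, so \eqref{cond:1} holds at $(t_1^\star,t_2^\star)$ for \emph{every} $R_0\in SO(3)$, and Proposition~\ref{prop:cond} gives distinguishability with no exceptional set at all.

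The substantive case is when \eqref{cond:modified} is positive \emph{only} through the mixed family, i.e.\ there are $t_1^\star,t_2^\star\ge 0$ and indices $i^\star\in\ell_1$, $j^\star\in\ell_2$ with $g_{i^\star}(t_1^\star)_\times\Phi(0,t_2^\star)b_{j^\star}(t_2^\star)\neq 0$. Set $u:=g_{i^\star}(t_1^\star)$ and $v:=\Phi(0,t_2^\star)b_{j^\star}(t_2^\star)$; since $\Phi(0,\cdot)\in SO(3)$ and $b_{j^\star}\in\mathbb{S}^2$ we have $u,v\in\mathbb{S}^2$, and they are non-collinear. I would then introduce the candidate exceptional set
$$
\mathcal{N}:=\big\{\,R_0\in SO(3)\ :\ u_\times R_0 v=0\,\big\}=\big\{\,R_0\in SO(3)\ :\ R_0 v\in\{u,-u\}\,\big\}.
$$
For any $R_0\notin\mathcal{N}$ the mixed term of \eqref{cond:1} is strictly positive at $(t_1^\star,t_2^\star)$, hence the system is distinguishable by Proposition~\ref{prop:cond}; consequently the set of initial attitudes at which distinguishability fails is contained in $\mathcal{N}$.

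It remains to show $\mathcal{N}$ has zero Lebesgue measure in $SO(3)$, which is the only place any real (though standard) work is needed. One clean route: the orbit map $\pi_v:SO(3)\to\mathbb{S}^2$, $R_0\mapsto R_0 v$, is a smooth surjective submersion whose fibres are cosets of the stabiliser $\mathrm{Stab}(v)\cong SO(2)$, hence one-dimensional, so $\mathcal{N}=\pi_v^{-1}(\{u,-u\})$ is a one-dimensional submanifold of the three-dimensional manifold $SO(3)$ and therefore null. Equivalently, $R_0\mapsto |u_\times R_0 v|^2$ is real-analytic on the connected analytic manifold $SO(3)$ and, by the non-collinearity just established, not identically zero, so its zero set $\mathcal{N}$ has measure zero. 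Combining the two cases, the set of $R_0$ for which \eqref{cond:modified} does not yield distinguishability is either empty or contained in a measure-zero set, which is the claim. I do not expect a genuine obstacle: the only care points are the bookkeeping of the $R_0$-dependence across \eqref{cond:1}--\eqref{cond:modified} and the elementary submanifold/analyticity argument for $\mathcal{N}$.
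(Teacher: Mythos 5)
Your proposal is correct and follows essentially the same route as the paper: reduce to the case where only the mixed term activates the modified condition, identify the exceptional set $\{R_0\in SO(3): R_0\mathbf{v}=\pm\mathbf{w}\}$, and show it is Lebesgue-null in $SO(3)$. The only difference is cosmetic --- your submersion/real-analyticity justification of the measure-zero claim is actually tighter than the paper's informal Euler-angle argument --- so no changes are needed.
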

\begin{proof}
It is given in the appendix.
\end{proof}

\begin{remark}
\rm 
In \cite{TRUetal} the authors propose the following
\emph{sufficient} (but not necessary, c.f. \cite[Remark 3.9]{TRUetal}) condition for distinguishability of the
given system.
\begin{equation}
\label{cond:trumpf}
\begin{aligned}
&\lambda_2 \left( \sum_{i\in \ell_1}\int_0^T g_i(s) g_i^\top(s) ds \right)
\\
& \hspace{1.7cm} +
\left\|
\int_0^T \sum_{j\in \ell_2} \left(\omega_\times b_j(s) + {d\over ds}b_j(s)
\right)ds
\right\| >0,
\end{aligned}
\end{equation}
for some $T>0$, with $\lambda_2(\cdot)$ representing the second largest
eigenvalue of a square matrix. Note that in the above condition it is necessary to
impose (piece-wise) smoothness of the signals $b_j$. In the following corollary, we show that the above condition is sufficient to the proposed necessary and sufficient condition \eqref{cond:1}. 
\end{remark}

\begin{corollary}
\label{cor:subset}\rm
Consider the time-varying system \eqref{dot:R} with the output \eqref{yij}, and $n := n_1+n_2 \ge 1$. If \eqref{cond:trumpf} holds, then the condition in
Proposition \ref{prop:cond} is also verified.
\end{corollary}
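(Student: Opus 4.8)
The plan is to proceed by contraposition, building on the reformulation established inside the proof of Proposition~\ref{prop:cond}: there it was shown that condition \eqref{cond:1} is equivalent to \eqref{cond:phi}, namely to the existence of moments $t_1,t_2\ge 0$ and indices $i,j\in\{1,\ldots,n\}$ with $\phi_i(t_1)\times\phi_j(t_2)\neq 0$, where the columns of $\phi=[\,g_1,\ldots,g_{n_1},y_{{\tt I},1},\ldots,y_{{\tt I},n_2}\,]$ all lie in $\mathbb{S}^2$. It therefore suffices to show that if \eqref{cond:phi} fails, then \eqref{cond:trumpf} fails as well. So I would assume $\phi_i(t_1)\times\phi_j(t_2)=0$ for \emph{all} indices $i,j$ and \emph{all} $t_1,t_2\ge 0$. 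Since the $\phi_i(t)$ are unit vectors and any two of them are collinear, fixing one of them, call it $v\in\mathbb{S}^2$, every $\phi_i(t)$ must equal $v$ or $-v$ for every $i$ and every $t\ge 0$.

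I would then evaluate the two terms of \eqref{cond:trumpf} under this hypothesis and show that both vanish. For the first term, $g_i(s)g_i^\top(s)=vv^\top$ for all $i\in\ell_1$ and all $s$, so $\sum_{i\in\ell_1}\int_0^T g_i(s)g_i^\top(s)\,ds=(n_1 T)\,vv^\top$, a matrix of rank at most one; hence its second largest eigenvalue is zero. For the second term, note that $y_{{\tt I},j}=Rb_j$ takes values in $\{v,-v\}$; since $R(\cdot)$ is continuous, being a solution of \eqref{dot:R}, and the $b_j$ are piecewise smooth — an assumption needed for \eqref{cond:trumpf} to be meaningful — each $y_{{\tt I},j}$ is constant on every interval of continuity, so $\dot{y}_{{\tt I},j}=0$ for almost every $s$. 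Differentiating $y_{{\tt I},j}=Rb_j$ along \eqref{dot:R} gives $\dot{y}_{{\tt I},j}=R\big(\omega_\times b_j+\dot{b}_j\big)$, and invertibility of $R(s)$ then yields $\omega_\times b_j(s)+\dot{b}_j(s)=0$ for almost every $s$, whence $\int_0^T\sum_{j\in\ell_2}\big(\omega_\times b_j(s)+\dot{b}_j(s)\big)\,ds=0$. Adding the two contributions, the left-hand side of \eqref{cond:trumpf} equals zero, contradicting the assumption that \eqref{cond:trumpf} holds; this establishes the corollary.

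The step I expect to require the most care is the regularity bookkeeping in the second term: one must verify that ``$y_{{\tt I},j}$ takes values in the two-point set $\{v,-v\}$'', together with the available smoothness, genuinely forces $\dot{y}_{{\tt I},j}=0$ almost everywhere, so that the identity $\omega_\times b_j+\dot{b}_j=R^\top\dot{y}_{{\tt I},j}$ obtained from \eqref{dot:R} can legitimately be integrated to zero. The rank-and-eigenvalue computation for the first term and the collinearity argument in the first paragraph are routine. A slightly cleaner packaging would avoid $\dot{y}_{{\tt I},j}$ altogether and argue directly that $R(t)b_j(t)$ is pinned to a fixed direction, but the derivative-based route is the one that matches the form of \eqref{cond:trumpf} most transparently.
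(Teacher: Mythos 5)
Your proof is correct, and it takes a noticeably different route from the paper's. The paper argues in the forward direction: it splits \eqref{cond:trumpf} into its two non-negative summands, and for each one shows that positivity forces the existence of two moments witnessing \eqref{cond:1}; the delicate step there is the compatible-measurement term, which the paper handles by Taylor-expanding $y_{{\tt I},j}(t_1+\Delta t)$ and deriving a contradiction with $y_{{\tt I},j}\in\mathbb{S}^2$ to conclude $y_{{\tt I},j}(t_1)\times\dot y_{{\tt I},j}(t_1)\neq 0$. You instead take a single global contraposition: failure of \eqref{cond:phi} pins every column of $\phi$ to $\pm v$ for one fixed $v\in\mathbb{S}^2$, which kills the $\lambda_2$ term by a rank count and kills the second term because a piecewise-continuous map into the two-point set $\{v,-v\}$ is locally constant, whence $\omega_\times b_j+\dot b_j=R^\top\dot y_{{\tt I},j}=0$ almost everywhere. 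Your treatment of the second term is cleaner (no $o(\Delta t^2)$ bookkeeping on the sphere), and it handles both terms of \eqref{cond:trumpf} in one stroke; the trade-off is that the paper's direct argument explicitly exhibits the moments $t_1,t_2$ at which \eqref{cond:1} is satisfied, which is slightly more constructive. Your reliance on the equivalence \eqref{cond:1}$\iff$\eqref{cond:phi} from the proof of Proposition~\ref{prop:cond} is legitimate, and your caveat about interpreting $\dot b_j$ classically on the smooth pieces is exactly the regularity convention under which \eqref{cond:trumpf} is stated.
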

\begin{proof}
It is given in the appendix.
\end{proof}

%
%%%%%%%%%%%%%%%%%%
\section{Attitude Observer for A Single Vector Measurement}
\label{sec4}
%%%%%%%%%%%%%%%%%%
%

In this section, we show that the distinguishability condition -- identified in Proposition \ref{prop:cond} -- is adequate to design a continuous-time observer with almost globally asymptotically convergent estimate to the unknown attitude. %We focus on the case with only a single vector measurement, which is more challenging than the multiple vector case.

Since the scenario with only a single vector measurement is more challenging than the multiple vector case, we focus on the former in this section. The main results can be extended to the case with multiple vector measurements in a straightforward manner.

\subsection{Attitude Observer Using Integral Correction Term}
Let us consider the observer design with a single complementary measurement \eqref{yb}. In the first observer design, we construct a dynamic extension -- following the PEBO methodology \cite{ORTetalscl} -- in order to reformulate attitude estimation as an on-line consistent parameter identification problem. By adding an elaborated construction of ``integral''-type correction term, we are able to achieve asymptotic stability of the observer.

%We have the following.

\begin{proposition}\rm
\label{prop:obs1}
For the system \eqref{dot:R} with the complementary output \eqref{yb}, we assume that all signals are piece-wisely continuous and the reference satisfies the distinguishability
condition, i.e.,  
\begequ
\label{cond:g}
\exists t_1, t_2 >0, \quad \big|g(t_1)\times g(t_2)\big| >0,
\endequ
with a \emph{known} bound $T>0$ on the distinguishability
interval.\footnote{Namely, there exists a known $T>0$ such that $0\le t_1 <
t_2 \le T$.} The attitude observer
\begin{equation}
\label{dot:Q}
 \dot Q  = Q\omega_\times
\end{equation}
with $Q(0)\in SO(3)$ and
\begin{equation}
\label{dot:Qc}
\begin{aligned}
\dot {\hat Q}_c  = \eta_\times \hat{Q}_c 
, \quad
 \hat R  = \hat Q_c^\top Q
\end{aligned}
\end{equation}
with
$$
\begin{aligned}
\eta & = \gamma_{\tt P} (\hat{Q}_c g) \times (Qy_{\tt B}) + \gamma_{\tt I} \xi
\\
\xi & = 2\vex \big( \skew(A \hat Q_c^\top) \big)
\\
\dot A & =\left\{ \begin{aligned}
&Q y_{\tt B} g^\top, && t\in [0, T) \\ & 0_{3\times 3}, && t \ge T.
\end{aligned}
\right.
\end{aligned}
$$
with the gains $\gamma_{\tt P}, \gamma_{\tt I}>0$ and $A(0)= 0_{3\times 3}$,
guarantees $\hat R(t) \in SO(3)$ for all $t\ge 0$ and the convergence
\begequ
\label{convergence:hat_R}
\lim_{t\to\infty} \| \hat R(t) - R(t)  \| =0
\endequ
almost globally. 
\end{proposition}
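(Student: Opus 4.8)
The plan is to follow the PEBO route: recast attitude estimation as identification of a constant rotation, derive \emph{autonomous} error dynamics past the known excitation window, and then exploit a gradient‑like structure on $SO(3)$. First I would note that $\eta_\times,\omega_\times\in{\mathfrak{so}}(3)$, so both flows keep $Q$ and $\hat Q_c$ in $SO(3)$ and hence $\hat R=\hat Q_c^\top Q\in SO(3)$ for all $t\ge0$. As in the proof of Proposition~\ref{prop:cond}, $\tfrac{d}{dt}(RQ^\top)=0$ gives $R(t)=C\,Q(t)$ with the constant $C:=R_0Q(0)^\top\in SO(3)$, so that $Qy_{\tt B}=QR^\top g=C^\top g$. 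With the error $E:=\hat Q_c C\in SO(3)$, a short trace computation yields $\|\hat R-R\|=\|E-I_3\|$; hence it remains to show $E(t)\to I_3$ for almost all initial conditions. For $t\ge T$ the integrator freezes at $A(t)\equiv\int_0^T Qy_{\tt B}\,g^\top ds=C^\top M$ with $M:=\int_0^T g(s)g(s)^\top ds\succeq0$, and \eqref{cond:g} together with piecewise continuity of $g$ gives $\rank M\ge2$. Writing $\hat Q_c=EC^\top$ and $v:=C^\top g$ turns the observer into $\dot E=\eta_\times E$ with $\eta=\gamma_{\tt P}(Ev)\times v+\gamma_{\tt I}\xi$, $\xi=2\vex(\skew(NE^\top))=-2\vex(\skew(EN))$, $N:=C^\top MC\succeq0$, $\rank N\ge2$.

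Next I would analyse, for $t\ge T$, the candidate
\[ \mathcal V(E,t):=\gamma_{\tt I}\,\tr(N(I_3-E))+\gamma_{\tt P}(1-v^\top Ev)\ \ge\ 0 , \]
which vanishes only at $E=I_3$ (the first summand already does, since $\rank N\ge2$ forces $E$ to fix a two‑plane). Differentiating along $\dot E=\eta_\times E$ and using $\tr(B\omega_\times)=-2\vex(\skew(B))^\top\omega$ and the triple‑product identity $a^\top(b\times c)=b^\top(c\times a)$, the proportional and integral cross terms combine into a single negative square and one finds
\[ \dot{\mathcal V}=-|\eta|^2-\gamma_{\tt P}\,\dot v^\top(E+E^\top)v , \]
where the residual is $-\tfrac12\gamma_{\tt P}\|E-I_3\|^2(\dot v^\top u)(u^\top v)$ with $u$ the rotation axis of $E$, hence of size $O(\|E-I_3\|^2|\dot g|)$. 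In the intended weak‑excitation setting the reference is quiescent past the excitation window, so $\dot g\equiv0$ for $t\ge T$; then $\dot{\mathcal V}=-|\eta|^2\le0$ and the error obeys the autonomous gradient‑descent flow $\dot E=-(\nabla\mathcal V(E))_\times E$ on $SO(3)$.

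For the convergence conclusion I would argue as follows. On $[0,T)$ the trajectories live on the compact set $SO(3)$, hence are bounded. For $t\ge T$, $\mathcal V$ is nonincreasing and bounded below, so convergent, and by LaSalle's invariance principle $E(t)$ approaches the critical set of $\mathcal V$, i.e.\ the $E$ with $E\tilde N$ symmetric, $\tilde N:=\gamma_{\tt I}N+\gamma_{\tt P}vv^\top\succeq0$, $\rank\tilde N\ge2$. Here $I_3$ is the unique minimiser, while the remaining critical points are the $\pi$‑rotations about the eigenaxes of $\tilde N$ -- a finite set, or a finite union of lower‑dimensional submanifolds if $\tilde N$ has repeated eigenvalues -- which are saddles/maxima. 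Using real‑analyticity of $\mathcal V$ (so that the flow converges to a single critical point), I would conclude that, outside the union of the (measure‑zero) stable manifolds of those non‑minimal critical points, $E(t)\to I_3$, i.e.\ $\lim_{t\to\infty}\|\hat R(t)-R(t)\|=0$.

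The hard part is the almost‑global claim: one must pin down the complete critical set of $\mathcal V$ on $SO(3)$ and show the non‑minimal critical points have stable manifolds of dimension $<3$; the $\gamma_{\tt P}$‑term is exactly what makes this work, since it perturbs $\tilde N$ so that these points remain isolated/unstable even though $N$ has rank only two and may be degenerate. A secondary technical hurdle is a reference that keeps moving after the excitation window: then the residual $-\gamma_{\tt P}\dot v^\top(E+E^\top)v$ in $\dot{\mathcal V}$ must be absorbed, e.g.\ via its $O(\|E-I_3\|^2)$ bound together with the local exponential contraction at $E=I_3$ (which follows from $|\eta|^2\gtrsim\gamma_{\tt I}^2\lambda_{\min}(\tr(N)I_3-N)^2\|E-I_3\|^2$ near $I_3$).
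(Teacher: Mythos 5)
Your reduction to the constant rotation $Q_c$ (your $C^\top$), the frozen integral $A(T)=C^\top M$, and the rewriting $\eta=\gamma_{\tt P}(Ev)\times v+\gamma_{\tt I}\xi$ with $\xi=-2\vex(\skew(EN))$ all coincide with the paper's argument, and your computation $\dot{\mathcal V}=-|\eta|^2-\gamma_{\tt P}\dot v^\top(E+E^\top)v$ is correct. The genuine gap is in how you dispose of the residual term. The proposition only assumes $g$ is piecewise continuous and excited somewhere in $[0,T]$; it does \emph{not} assume the reference is quiescent after $T$, so your step ``$\dot g\equiv 0$ for $t\ge T$'' inserts a hypothesis that is not in the statement. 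Your fallback — absorbing the $O(\|E-I_3\|^2|\dot v|)$ residual into $-|\eta|^2$ via local exponential contraction — only works near $E=I_3$: near the antipodal critical points of $\mathcal V$ one has $\eta\to 0$ while $\|E-I_3\|^2\approx 8$, so $-|\eta|^2$ cannot dominate the residual there, and the almost-global claim does not go through. A second, acknowledged but unresolved, gap is the classification of the non-minimal critical points of $\mathcal V$ and the measure-zero property of their stable manifolds for a possibly degenerate, time-varying $\tilde N=\gamma_{\tt I}N+\gamma_{\tt P}vv^\top$; your LaSalle/Lojasiewicz argument also presupposes autonomy of the error flow, which again requires $\dot v\equiv 0$.

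Both difficulties are artifacts of choosing the \emph{weighted}, explicitly time-dependent cost $\mathcal V(E,t)=\gamma_{\tt I}\tr(N(I_3-E))+\gamma_{\tt P}(1-v^\top Ev)$. The paper instead uses the unweighted function $V=3-\tr(\tilde Q_c)$, which carries no explicit time dependence, so no $\dot v$ term ever appears; the weights surface only in the derivative, which collapses to the quadratic form
\begin{equation*}
\dot V=-2\,\vex\big(\skew(\tilde Q_c)\big)^\top\Gamma(t)\,\vex\big(\skew(\tilde Q_c)\big),\qquad \Gamma(t)=\gamma_{\tt P}\big(I_3-y_cy_c^\top\big)+\Gamma_{\tt I}(t)\succeq 0,
\end{equation*}
with $\Gamma(t)\succ 0$ for $t\ge T$ by the excitation condition — valid for an arbitrarily varying $g$. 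This choice also dissolves your ``hard part'': Barbalat then forces $\skew(\tilde Q_c)\to 0$, and the residual set $\{\skew(\tilde Q_c)=0\}\setminus\{I_3\}$ consists of $\pi$-rotations, which is exactly the \emph{maximum} level set of $V$; since $V$ is nonincreasing, it is unreachable from any initial condition with $V(0)<4$, and the exceptional set is measure zero without any stable-manifold analysis. If you want to salvage your route, you would either have to add the standing assumption $\dot g\equiv 0$ on $[T,\infty)$ (weakening the result) or switch to the paper's unweighted Lyapunov function.
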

%=====
\begin{proof}
Let us consider the dynamic extension
$
\dot Q = Q\omega_\times
$
for the initial condition $Q(0)\in SO(3)$. By defining a variable $E(R,Q) =
QR^\top$ -- which also lives in $SO(3)$ -- we have
$$
\dot E = \dot Q R^\top - Q R^\top \dot R R^\top = 0.
$$
Therefore, there exists a constant matrix $Q_c \in SO(3)$ such that
\begin{equation}
\label{id:QR}
Q(t)R^\top(t) = Q_c, \quad \forall t\in [0,+\infty).
\end{equation}
Note that $Q(t)$ is an available signal by construction, and $Q_c$ is unknown. Invoking \eqref{id:QR} and the full-rankness of $Q$, the estimation of $R$ is equivalent to the one of $Q_c$.

Based on the above idea, we construct the following auxiliary system
\begin{equation}
\label{aux:2}
\Sigma_c: \left\{~
\begin{aligned}
\dot Q_c & = Q_c (\omega_c)_\times
\\
y_c & = Q_c b_c,
\end{aligned}
\right.
\end{equation}
in which $Q_c\in SO(3)$ is constant thus $\omega_c = 0_3$, the output 
$$
y_c(t) ~:=~ Q(t) y_{\tt B}(t),
$$
and the ``body-fixed coordinate'' reference $b_c:= g$. It is clear
that the system $\Sigma_c$ is exactly in the same form as the kinematic model with a compatible measurement \eqref{dot:R} and
\eqref{yi}.

We now define the estimation error of $\tilde Q_c:= \hat Q_c Q_c^\top$, the
dynamics of which is given by
\begequ
\lab{dot:tQc}
\dot{\tilde Q}_c = \dot{\hat Q}_c Q_c^\top - \hat Q_c Q_c^\top \dot Q_c
Q_c^\top = \eta_\times \tilde{Q}_c.
\endequ
The term $\eta$ satisfies  
\begin{equation}
\begin{aligned}
\eta_\times & ~=~
\gamma_{\tt P} [(\hat Q_c g)\times (Qy_{\tt B})]_\times + \gamma_{\tt I}
\xi_\times
\\
& ~= ~
\gamma_{\tt P} \big[ Qy_{\tt B} (\hat Q_c g)^\top - \hat Q_c g (Qy_{\tt
B})^\top \big]    + \gamma_{\tt I} \xi_\times
\\
& ~=~ \gamma_{\tt P} \big( y_cy_c^\top \tilde Q_c^\top - \tilde Q_c y_c y_c^\top
\big)    + \gamma_{\tt I} \xi_\times
\end{aligned}
\end{equation}
in which for $t\in [0,T]$
$$
\begin{aligned}
\xi_\times & ~=~  \int_0^t   
\Big[
 Q(s)y_{\tt B}(s) \big(\hat Q_c(t) g(s)\big)^\top - \hat Q_c(t)
g(s) \big(Q(s)y_{\tt B}(s)\big)^\top 
\Big]ds
\\
& ~=~ 
2\skew\left(\int_0^t y_c(s)y_c^\top(s) ds \cdot \tilde Q_c^\top\right),
\end{aligned}
$$
and for $t>T$ we have $\xi(t) = \xi(T)$.

Consider the candidate Lyapunov function $V(\tilde Q_c) = 3 - \tr(\tilde Q_c)$,
which has its minimal value zero has at $\tilde Q_c = I_3$. It yields 
$$
\begin{aligned}
\dot V & = - \tr(\eta_\times \tilde Q_c) 
\\
& = - \gamma_{\tt P} \tr \Big( y_c y_c^\top -\tilde Q_c  y_c y_c^\top \tilde Q_c \Big)
\\
& \hspace{0.4cm} 
-  \gamma_{\tt I}
 \int_0^t \tr \Big( y_c(s) y_c^\top(s)  -  \tilde Q_c 
 y_c(s) y_c^\top(s) \tilde Q_c \Big) ds
\\
& = - \gamma_{\tt P} y_c^\top (I- \tilde Q_c^2) y_c
-  \gamma_{\tt I} \int_0^t y_c^\top(s) (I- \tilde Q_c^2) y_c(s)ds
\\
& = - 2 \vex\big(\skew(\tilde Q_c)  \big)^\top \Gamma \vex\big(\skew(\tilde
Q_c)  \big)
\\
& \le -\lambda_{\tt min} (\Gamma ) \|\skew(\tilde Q_c)\|^2,
\end{aligned}
$$
where in the fourth equation we have used $2|v|^2 = \|v_\times\|^2$ for any
$v\in \rea^3$, with the definition of $\Gamma$ as
\begin{equation}
\label{Gamma}
\Gamma = \Gamma_{\tt P} + \Gamma_{\tt I}
\end{equation}
with
$$
\begin{aligned}
\Gamma_{\tt P}(t) & ~:=~ \gamma_{\tt P} (I- y_c(t) y_c^\top (t))
\\
\Gamma_{\tt I}(t) & ~:=~ \left\{
\begin{aligned}
&\gamma_{\tt I} \int_0^t \Big(I- y_c(s) y_c^\top (s) \Big)ds, && t\in [0,T]
\\
&\Gamma_{\tt I}(T), && t >T.
\end{aligned}
\right.
\end{aligned}
$$

Let us study the property of the matrix $\Gamma \in \rea^{3\times 3}$. From the
assumption $|g(t_1) \times g_2(t_2)|>0$ for some $t_1,t_2\le T$, we have
$$
\begin{aligned}
|y_c(t_1) \times y_c(t_2)| & ~=~ |(Q_cg(t_1))_\times (Q_c g(t_2))|
\\
& ~=~ |Q_c g(t_1) \times g(t_2)|
\\
& ~>~ 0.
\end{aligned}
$$
It implies that
\begequ
\label{cond:sum1}
 2I - y_c(t_1)y_c^\top(t_1) - y_c(t_2) y_c^\top (t_2)  >0,
\endequ
in which we have used the fact that for $a,b\in \mathbb{S}^2$, $|a\times b|>0$
implies the positiveness of $(2I-aa^\top - bb^\top)$; see \cite[Lemma
A.2]{TRUetal} for its proof. On the other hand, in terms of the continuity of
$y_c$ and \eqref{cond:sum1}, we have
$$
\int_{t_1}^{t_1+ \varepsilon} I - y_c(s)y_c^\top(s)ds + \int_{t_2}^{t_2+
\varepsilon} I - y_c(s)y_c^\top(s)ds > 0
$$
for some sufficiently small $\varepsilon>0$, and thus
$$
\int_0^T I - y_c(s)y_c^\top(s)ds > 0 \quad \implies \quad \lambda_{\tt
min}(\Gamma(t))>0, \; \forall t\ge T.
$$
From $\dot V \le -\lambda_{\tt min}(\Gamma)\|\skew(\tilde{Q}_c(s))\|^2$, we get
that $0\le V(\tilde Q_c(t)) \le V(\tilde Q_c(0))$ and
$$
V(\tilde Q_c(t)) - V(\tilde Q_c(0)) \le  - \int_0^t \lambda_{\tt
min}(\Gamma(s))\|\skew(\tilde{Q}_c(s))\|^2 ds.
$$
By taking $t\to\infty$, in terms of Barbalat's lemma and the boundedness of the time derivative of $\tilde Q_c$, we have $\|\skew(\tilde Q_c)\| \to 0$. The set
$\{\tilde Q_c \in SO(3) : \|\skew(\tilde Q_c)\| =0\}$ only contains the stable
equilibrium $I_3$ and the unstable equilibria $U^\top \diag(1,-1,1)U$. For the
latter case, the Lyapunov $V(\tilde Q_c)$ equals to its maximal value, thus
having zero Lebesgue measure. Therefore, the dynamics \eqref{dot:tQc} is almost
globally asymptotically stable on $SO(3)$. Invoking the algebraic relation $R =
Q_c^\top Q$, we complete the proof.
\end{proof}

\begin{remark}\rm
In the above attitude observer design, the error term $\eta$ contains two parts
$$
\eta ~ =~\underbrace { \small \gamma_{\tt P}(\hat{Q}_c g) \times
(Qy)}_{\mbox{current}} ~~ +  \underbrace{\gamma_{\tt I}\xi}_{\small
\mbox{historical}},
$$
which may be viewed as an observer design using a ``proportional + integral''-type error term. The first term only utilizes the current information, making it behave as an on-line design. The second ``integral'' term, which may be written as
$$
\begin{aligned}
\xi(t)  & = 
\left\{\begin{aligned}
&\int_0^t  [\hat{Q}_c(t) g(s)] \times [Q(s)y_{\tt I}(s)]ds, & &t \in [0, T]
\\
&~\xi(T), & & t\ge T
\end{aligned}
\right.
\end{aligned}
$$
enables to achieve asymptotic convergence of the estimation error under the
extremely weak identifiability condition \eqref{cond:g}. The gain parameters $\gamma_{\tt P}$ and $\gamma_{\tt I}$ can be used as the weights on how we trust the current and historical data.
\end{remark}

\begin{remark}\rm
The bound $T>0$ is used in the dynamics of the variable $A$ in order to be able
to guarantee its boundedness. Indeed, the bound $T$ is not necessarily known
apriori, since the distinguishability condition \eqref{cond:g} is an
easily-checkable condition on measured quantities. The proposed scheme may be
modified as an \emph{adaptive} design in which such a condition is checked
online continuously, and the dynamics of $A$ simply changes until the condition
holds. It is also natural to replace the condition \eqref{cond:g} by $|g(t_1)
\times g(t_2)|>\delta$ for some $\delta>0$, to deal with sensor noise.
\end{remark}

\begin{remark}\rm
As is shown above, the ``integral'' term only accumulates information in the interval $[0,T]$, which, however, does not have the sort of ``fading memory'' property on past measurements. As long as the excitation condition, which is easily monitored on-line, the observer performance can be improved considering the moving interval $[t-T, t]$ rather than $[0,T]$ in Proposition \ref{prop:obs1}.
\end{remark}

\subsection{Attitude Observer Using Virtual Vectors}

In this subsection, we provide an alternative observer design, which does not
need the information of $T$. The basic idea is to generate a new
``\emph{virtual}'' vector measurement 
\begequ
\label{yvqc}
y_v = Q_c b_v
\endequ
from the real measurement \eqref{aux:2}, such that 
\begequ
\label{bvbc}
b_v \times b_c \neq 0
\endequ
uniformly after some moment with $b_c = g$. Then, it becomes the well-studied attitude observer design problem with (not less than) two non-collinear vectors, which has been well addressed in the literature. 

\begin{proposition}\rm
\label{prop:obs2}
For the system \eqref{dot:R} with the output \eqref{yb}, we assume that all
signals are continuous and satisfy $\exists t_i >0$ ($i=1,2,3$)
\begequ
\label{cond:2}
\det \Big( \begmat{g(t_1) & g(t_2) & g(t_3)} \Big) \neq 0
 %\big|g(t_i)\times g(t_j)\big| >0, \quad \forall i,j  \in \{1,2,3\},\; i\neq j.
\endequ
Consider the dynamic extension \eqref{dot:Q} and the LTV filter
\begin{equation}
\label{filter:ltv}
\begin{aligned}
\dot Z & ~= ~ \gamma_z g(y_{\tt B}^\top Q^\top - g^\top Z)
\\
\dot \Omega & ~=~ -\gamma_z g g^\top \Omega, ~\Omega(0) = I_3
\\
\dot P & ~ = ~ \gamma (I_3 - \Omega)^\top\left[ Z - \Omega Z_0 - (I_3 - \Omega)
P\right]
\end{aligned}
\end{equation}
with $\gamma, \gamma_z>0$, $Z_0:= Z(0)$, and the filtering outputs 
$$
b_v = U g, \quad y_v = {P^\top Ug }
$$
in which $U:= \diag(\calj,1) \diag(1, \calj)$. Then, the observer
\eqref{dot:Qc} with
\begequ
\label{eta:2}
\eta ~=~ \gamma_c (\hat Q_c g) \times (Qy_{\tt B}) + \gamma_v (\hat Q_c b_v)
\times y_v
\endequ
and the gains $\gamma_c, \gamma_v>0$ guarantees $\hat R(t)\in SO(3)$ for all
$t\in [0,\infty)$ and the convergence \eqref{convergence:hat_R} almost globally.
\end{proposition}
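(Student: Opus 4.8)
The plan is to reduce the single-measurement problem, via the same PEBO change of coordinates used in the proof of Proposition \ref{prop:obs1}, to the well-understood problem of observing a \emph{constant} attitude $Q_c$ from two non-collinear vectors, and to show that the LTV filter \eqref{filter:ltv} manufactures the required second vector. As in the proof of Proposition \ref{prop:obs1}, since $\dot Q = Q\omega_\times$ with $Q(0)\in SO(3)$ one has $Q(t)R^\top(t)\equiv Q_c$ for a constant $Q_c\in SO(3)$, so $R = Q_c^\top Q$ and it suffices to estimate $Q_c$; moreover $\hat Q_c\in SO(3)$ and $\hat R = \hat Q_c^\top Q\in SO(3)$ automatically because $\eta_\times$ is skew-symmetric, and all internal signals will turn out to be bounded. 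Writing $y_c := Q y_{\tt B} = Q_c g$, the relevant auxiliary system is \eqref{aux:2} with body-fixed vector $b_c = g$; the $\gamma_c$-term of \eqref{eta:2} alone is merely the on-line observer that, under \eqref{cond:g}, yields distinguishability but not convergence, which is why the second, filter-generated vector is needed.

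Next I would analyse the filter. Because $y_{\tt B}^\top Q^\top = y_c^\top = g^\top Q_c^\top$, the $Z$–equation collapses to $\dot Z = \gamma_z\, gg^\top(Q_c^\top - Z)$, so $Z - Q_c^\top$ obeys the \emph{same} LTV dynamics as $\Omega$; hence $Z(t) - \Omega(t)Z_0 = (I_3-\Omega(t))Q_c^\top$, and substituting this into the $P$–equation gives, with $\tilde P := P - Q_c^\top$,
$$
\dot{\tilde P} = -\gamma\,(I_3-\Omega)^\top(I_3-\Omega)\,\tilde P .
$$
It then remains to prove that $I_3-\Omega$ is eventually uniformly nonsingular. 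By continuity of $g$ and condition \eqref{cond:2}, $\int_0^{T^*} g(s)g^\top(s)\,ds>0$ for some finite $T^*$; since $\frac{d}{dt}(\Omega^\top\Omega) = -2\gamma_z(\Omega^\top g)(\Omega^\top g)^\top\le 0$, the map $\Omega^\top\Omega$ is non-increasing, and $|\Omega(T^*)v|^2 - |v|^2 = -2\gamma_z\int_0^{T^*}(g^\top\Omega v)^2\,ds$ vanishes only if $g^\top\Omega v\equiv 0$ on $[0,T^*]$, which forces $\Omega(s)v\equiv v$ and hence $v\in\ker\int_0^{T^*}gg^\top ds = \{0\}$. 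By compactness of the unit sphere $\Omega^\top(T^*)\Omega(T^*)\le c^2 I_3$ with $c<1$, so $\Omega^\top(t)\Omega(t)\le c^2 I_3$ and $(I_3-\Omega(t))^\top(I_3-\Omega(t))\ge(1-c)^2 I_3$ for all $t\ge T^*$. Therefore $\tilde P\to 0$ exponentially and, $U$ being orthogonal, $\epsilon(t) := y_v(t) - Q_c b_v(t) = \tilde P^\top(t) U g(t)\to 0$ exponentially.

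With the second vector thus available, I would set up the error system $\tilde Q_c := \hat Q_c Q_c^\top\in SO(3)$, for which $\dot{\tilde Q}_c = \eta_\times\tilde Q_c$; using $\hat Q_c g = \tilde Q_c y_c$ and $\hat Q_c b_v = \tilde Q_c(y_v-\epsilon)$ the correction \eqref{eta:2} reads
$$
\eta = \gamma_c(\tilde Q_c y_c)\times y_c + \gamma_v(\tilde Q_c y_v)\times y_v - \gamma_v(\tilde Q_c\epsilon)\times y_v ,
$$
i.e. the standard two-vector gradient observer for the body vectors $(b_c,b_v)=(g,Ug)$ plus an exponentially decaying perturbation. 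Taking $V = 3 - \tr(\tilde Q_c)$ and reusing the identity from the proof of Proposition \ref{prop:obs1}, namely $y^\top(I_3-\tilde Q_c^2)y = 2\vex(\skew(\tilde Q_c))^\top(I_3-yy^\top)\vex(\skew(\tilde Q_c))$, yields a bound of the form
$$
\dot V \le -2\,\vex(\skew(\tilde Q_c))^\top\Gamma_0(t)\,\vex(\skew(\tilde Q_c)) + c_1\,|\epsilon(t)| ,
$$
where $\Gamma_0(t) = Q_c\big[\gamma_c(I_3 - gg^\top) + \gamma_v(I_3 - Ug(Ug)^\top)\big]Q_c^\top$. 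Since $U = \diag(\calj,1)\diag(1,\calj)$ is the rotation by $120^\circ$ about $e := \frac{1}{\sqrt3}(1,1,1)^\top$, its only real eigenvector is $e$, so $g(t)\times Ug(t)\neq 0$ — equivalently $\lambda_{\min}(\Gamma_0(t))>0$ — for every $t$ with $g(t)\neq\pm e$; under the full-rank condition \eqref{cond:2} this keeps $\lambda_{\min}(\Gamma_0(t))$ bounded away from zero for large $t$. Almost-global convergence $\tilde Q_c\to I_3$ then follows from Barbalat's lemma together with a standard cascade argument absorbing the exponentially vanishing $\epsilon$, and, exactly as in the proof of Proposition \ref{prop:obs1}, from the fact that the remaining equilibria — rotations by angle $\pi$ — form a set of zero Lebesgue measure in $SO(3)$; finally $\hat R = \tilde Q_c^\top R$ gives \eqref{convergence:hat_R}.

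The algebra in the three reductions above is routine. I expect the main obstacles to be, first, the contraction estimate for $\Omega$: the rank-one matrices $g(s)g^\top(s)$ do not commute, so $\Omega$ cannot be diagonalised and the monotonicity argument sketched above is essential; and, second, turning the \emph{exponentially} vanishing error $\epsilon = y_v - Q_c b_v$ into a bona fide perturbation result — a cascade/ISS argument exploiting local exponential stability of $I_3$ for the nominal two-vector error dynamics — while simultaneously ensuring that the virtual direction $Ug$ stays genuinely non-collinear with $g$ (equivalently that $g$ does not linger near the axis $\pm e$ of $U$), which is precisely where condition \eqref{cond:2} does its work.
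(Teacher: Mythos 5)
Your proof follows the same route as the paper's: the PEBO reduction to a constant $Q_c$, the observation that $Z-Q_c^\top$ obeys the same LTV dynamics as $\Omega$ so that $Z-\Omega Z_0=(I_3-\Omega)Q_c^\top$ and $\dot{\tilde P}=-\gamma(I_3-\Omega)^\top(I_3-\Omega)\tilde P$, and finally a standard two-vector gradient observer for $(g,Ug)$ perturbed by the exponentially decaying term $\epsilon=y_v-Q_cb_v$. Two sub-arguments differ, both in your favour. First, where the paper proves $P\to Q_c^\top$ by citing that interval excitation of $g$ renders $I_3-\Omega^\top$ persistently exciting and then invoking a standard adaptive-systems convergence theorem, you give a short self-contained argument: $\Omega^\top\Omega$ is non-increasing, $|\Omega(T^*)v|=|v|$ forces $g^\top\Omega v\equiv 0$ and hence $\Omega(s)v\equiv v$ with $v\in\ker\int_0^{T^*}g g^\top ds=\{0\}$, so $\Omega^\top\Omega\le c^2 I_3$ with $c<1$ for all $t\ge T^*$ and $(I_3-\Omega)^\top(I_3-\Omega)\ge(1-c)^2I_3$; this is more elementary and makes the uniform rate explicit. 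Second, you correctly identify $U$ as the rotation by $120^\circ$ about $(1,1,1)^\top/\sqrt{3}$, so that $g\times Ug=0$ precisely when $g$ lies on that axis; the paper asserts $g\times(Ug)\neq 0$ unconditionally, which is false there, so your treatment catches a genuine overstatement --- although note that \eqref{cond:2} constrains $g$ only at three instants and does not by itself rule out $g(t)$ converging to $\pm(1,1,1)^\top/\sqrt{3}$, a loose end that neither your sketch nor the paper fully closes. The final cascade/perturbation step is left at the same level of detail as in the paper, which delegates it to a cited result.
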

%%%%
\begin{proof}
First, let us study the property of the LTV filter \eqref{filter:ltv}. Note
that $U$ can be decomposed as the product of three basic (element) rotations $U
= R_x(\theta_1) R_y(\theta_2) R_z(\theta_3)$ with $\theta_1=\theta_3 = {\pi
\over 2}$ and $\theta_2 =0$. Hence, $g \times (Ug) \neq 0$, verifying the
equation \eqref{bvbc}. Then, we need to verify \eqref{yvqc} in an asymptotic
sense.

According to the proof in Proposition \ref{prop:obs1}, we may reformulate the
estimation of $R$ as the one of $Q_c$.
From the dynamics of $\dot Z$, one has
$$
\begin{aligned}
{d\over dt}(Z - Q_c^\top) & ~=~ \gamma_z g (y_c^\top -g^\top Z)
\\
& ~=~ - \gamma_z g g^\top (Z- Q_c^\top),
\end{aligned}
$$
and thus
$$
Z - Q_c^\top  ~=~ \Omega(Z_0- Q_c^\top)
~ \implies ~
Z - \Omega Z_0 ~=~(I-\Omega) Q_c^\top.
$$
Then, it yields 
\begequ
\label{dot:PQc}
\begin{aligned}
{d\over dt} (P - Q_c^\top) & ~=~  -\gamma \phi\phi^\top(P - Q_c^\top).
\end{aligned}
\endequ
with $\phi:= I_3 - \Omega^\top$.

From the condition \eqref{cond:2}, for \emph{any} non-zero $x \in \rea^3$ it may
always be represented as
\begequ
x = c_1 g(t_1) + c_2 g(t_2) + c_3g(t_3),
\endequ
with at least one of $c_i$ ($i=1,2,3$) non-zero. Hence
$$
\begin{aligned}
x^\top \sum_{i =1}^3  g(t_i) g(t_i)^\top x > 0 & ~\implies ~
\sum_{i =1}^3  g(t_i) g(t_i)^\top >0
\\
& ~\implies ~ \int_{t_i}^{t_i + \varepsilon}
\sum_{i =1}^3  g(s) g(s)^\top ds >0
\\
& ~ \implies ~
\int_0^T g(s)g(s)^\top ds >0.
\end{aligned}
$$
for some $T > t_i$ and some sufficiently small $\varepsilon>0$. It implies that
the vector signal $g$ is intervally excited. As a result, the matrix $\phi =
I_3 - \Omega^\top$ is PE \cite[Proposition 2]{WANetal}. Following \cite[Thm.
2.5.1]{SASBOD}, we are able to derive 
\begequ
\label{P2Qc}
\lim_{t\to\infty}\|P - Q_c^\top\| = 0 
\quad \implies \quad
\lim_{t\to\infty } |P^\top Ug - Q_c g| =0
\endequ
exponentially fast. Hence, the algebraic equation \eqref{yvqc} is guaranteed
asymptotically, i.e., 
$$
y_v = Q_c b_v +\et
$$ with an exponentially decaying term $\et$.

The last step is to study the convergence of the estimation error of $Q_c$, which
is defined as $\tilde Q_c :=  Q_c \hat Q_c^\top$. The observer $\dot {\hat
Q}_c  = \eta_\times \hat{Q}_c $ with the output error term \eqref{eta:2} may be
written as the ``standard'' compatible observer 
$$
\dot{\hat Q}_c = \hat Q_c \bigg[ \hat Q_c \Big( \gamma_c(\hat Q_c b_c)\times
b_c +\gamma_v(\hat Q_c b_v + \et) \times b_v  \Big) \bigg]_\times
$$
for the auxiliary dynamics \eqref{aux:2} with $\omega_c = 0_3$. If the term
$\et$ was zero, in terms of \cite[Thm 4.3]{TRUetal} and the uniform
non-collinear relation $b_c\times b_v \neq 0$, we would obtain $\hat Q_c \to
Q_c$ as $t\to \infty$ almost globally. However, the term $\et$ is exponentially
decaying to zero, and we may follow the perturbation analysis in
\cite[Proposition 6]{YIJIN} using a time-varying Lyapunov function to obtain
the almost global asymptotic stability. We omit the details here. Invoking the
algebraic relation \eqref{id:QR}, we complete the proof.
\end{proof}

\begin{remark}\rm
The condition \eqref{cond:2} with three non-collinear moments is slightly
stronger than the one \eqref{cond:g}. However, it removes the necessity of having a
priori known bound $T>0$ for the observer design in Proposition
\ref{prop:obs2}. The key idea to construct a new ``virtual'' non-collinear reference vector $b_v$ is similar to some recent results on generation of PE regressors from those only satisfying IE \cite{WANetal,YIJIN,BOBetal}. We refer the reader to the monograph \cite{BES} for a discussion of excitation in observer design. 
\end{remark}

\begin{remark}\rm
For the case with a single compatible measurement \eqref{yi}, we may still get the auxiliary model \eqref{aux:2} by designing the dynamic extension $\dot Q = Q\omega_\times$, but with the new definitions of $y_c:= y_{\tt I}$ and $g_c: = Qb$. Then, the above two designs are capable to solve the problem with slight modifications accordingly. 
\end{remark}

%
%%%%%%%%%%%%%%%%%%%%%
\section{Discussions}
\label{sec5}
%%%%%%%%%%%%%%%%%%%%%
%

In this section, we show some practical issues in attitude estimation --
intermittent and delayed measurements \cite{BERTAYauto19,BAHNAM} -- can be
easily (and even trivially) tackled by the proposed methodology.

\begin{remark}\rm (\emph{Delayed measurement})
Time delay in attitude estimation is generally unavoidable, which is usually
caused by low-quality data sampling and poor sensors \cite{BAHNAM}. A common
scenario is that a known delay $\tau$ appears in the single vector measurement,
i.e.
$
y(t) = y_{\tt I}(t-\tau) = R^\top(t-\tau) g(t-\tau),
$
in which $\tau(t)$ may be constant or time-varying. After designing the dynamic
extension \eqref{dot:Q}, the delayed output can be rewritten as
$
y(t) = Q^\top(t-\tau) Q_c g(t-\tau).
$
Then, we are still able to get the auxiliary model \eqref{aux:2} with
\begequ
\label{ycbc2}
y_c:= Q(t-\tau)y(t) , \quad b_c:= g(t-\tau).
\endequ
Since $Q_c$ is a \emph{constant} matrix on the special orthogonal group, the
observers in Propositions \ref{prop:obs1}-\ref{prop:obs2} can provide
asymptotically convergent attitude estimation by modifying the ``reference
vector'' $b_c$ and the vector $y_c$ as \eqref{ycbc2}.
\end{remark}

%%%%%%%%%%%%%%%%%%%%%%

\begin{remark}\rm (\emph{Intermittent measurement})
Some types of sensors only provide intermittent measurement $y_{\tt I}$ at some
instants of time $t_k$ ($k \in \mathbb{N}_+$). Let sequence $\{t_k\}_{k \in
\mathbb{N}_+}$ be strictly increasing, and $|t_1|$ and $|t_{k+1} - t_{k}|$ ($k\ge
1$) are upper and lower bounded by two positive constants. In the proposed
attitude PEBO framework, we have translated the estimation of the variable
$R(t)$ into that of constant $Q_c$, and thus the intermittent measurement
does not bring any difficulty in observer design. By defining the
vectors
\begin{equation}
\label{ycbc3}
    y_c(t) := Q(t_i) y(t_i), \quad b_c(t) := g(t_i), \quad  \forall t\in
[t_i,t_{i+1}].
\end{equation}
Again, we get the auxiliary model \eqref{aux:2} with the modified reference
vectors in \eqref{ycbc3}. The proposed two \emph{continuous-time} observers can
solve attitude determination.
\end{remark}

%%%%%%%%%%%%%%%%%%%%%%

\begin{remark}\rm
(\emph{Robustness}) In practice, the measurement vector is perturbed by sensor noise, i.e., $y_{\tt M} = y_{\tt B} + n_y$ with a bounded term $n_y$, and $y_{\tt M}, y_{\tt B} \in \mathbb{S}^2$. Then, the correction term $\eta$ of the observer in Proposition \ref{prop:obs1} becomes $\eta_{\tt M} := \eta + \Delta_\eta$, in which $\eta$ is the nominal part defined in Proposition \ref{prop:obs1} and $\Delta_\eta$ is the additive term stemmed from the noise term $n_y$. Since $\hat Q_c, Q, g$ and $y_{\tt B}$ all live in some compact sets, as well as the variable $A$ being the integral over a \emph{finite} interval $[0,T]$, there exists a constant $k>0$ such that $|\Delta_\eta(t)|\le k\|n_y\|_\infty$. For this case, the time derivative of the Lyapunov function becomes $\dot V \le -\lambda_{\tt min}(\Gamma)\|\skew(\tilde Q_c)\|^2 + |\tr((\Delta_\eta)_\times \tilde Q_c)|$. Hence, we are able to establish the robustness of the observer in the bounded-input-bounded-output (BIBO) sense. The is also verified via noisy simulations in the coming section. 
\end{remark}

\section{Simulations}
\label{sec6}
%%%%%%%%%%%%%%%%%%%%%
%

%Some simulations have been carried out with to evaluate the performance of the proposed observers. 

\begin{example}
\label{example:1}
We consider a single time-varying inertial vector 
\begequ
\label{g:simulation}
g(t) = \left\{
\begin{aligned}
& e_1, && t\in [0,5) \mbox{s}
\\ 
& e_3, && t \ge 5 \mbox{s},
\end{aligned}
\right.
\endequ
in which $e_i$ represents the $i$-th standard Euclidean basis in $\rea^3$.
Clearly, it satisfies the sufficient excitation condition \eqref{cond:g}, but
not for the persistently non-constant reference vector assumption in many works
\cite{BATetal}. The attitude of the rigid-body starts from the initial condition
$R(0)=\diag(-1,-1,1)$ under the rotational velocity $\omega = [0.23 ~ -0.5 ~
0.15]^\top$. We added noise to both the angular velocity readings and the vector measurements.

First, we evaluate the performance of the scheme in Proposition
\ref{prop:obs1}. The observer is initialized from $Q(0)=\hat Q_c(0)= I_3$,
with the gains $\gamma_{\tt P} =3$ and $\gamma_{\tt I}=1$. It
corresponds to the initial yaw, pitch and roll estimates all being $0^\circ$.
The results of simulations are shown in Fig. \ref{fig:1} in the form of Euler angles, and also see the norm of the estimation error $|\tilde R|_I$ in Fig.
\ref{fig:3}, which is drawn in a logarithmic scale for the $y$-axis. During
$[0,5]$ s, the error $\tilde R$ is converging to some non-zero constant under a
constant vector measurement. This is because a single vector output makes two
of three Euler angles partially observable \cite{MARSAR}. After 5s the model
satisfies the distinguishability, and then all Euler angles converge to their
true values. Note that the proposed scheme is robust \emph{vis-\`a-vis}
measurement noise. Then, we test the second observer design in Proposition
\ref{prop:obs2}, with the simulation results presented in the same figure. Though the reference vector $g$ in \eqref{g:simulation} does not satisfy the sufficient condition \eqref{cond:2}, it is interesting to observe that all the Euler angles converge to zero asymptotically for the same reference vector $g$. This implies that the condition \eqref{cond:2} is not necessary for the convergence of the second observer design. At the end, let us compare the proposed schemes to the complementary attitude observer in \cite{TRUetal}, whose convergence is guaranteed by a persistent excitation condition. Clearly, this is not satisfied by the inertial reference vector in \eqref{g:simulation}. We show the simulation results for in Fig. \ref{fig:3}. As expected, the estimate $\hat R$ from the observer in \cite{TRUetal} fails to converge to its true values. Besides, we note that the first design in Proposition \ref{prop:obs1} is less sensitive to measurement noise.

%%%%

\begin{figure*}[htbp!]
\centering
\includegraphics[width = 0.99\linewidth]{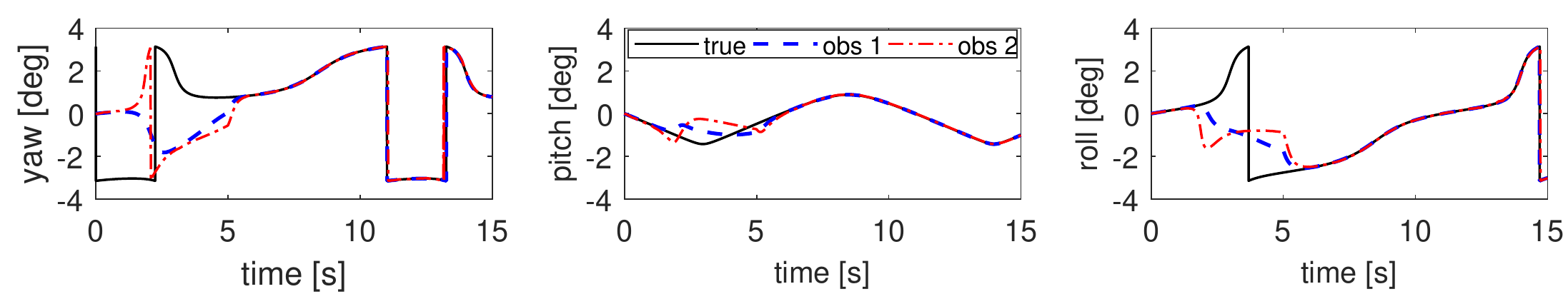}
\caption{Performance of the attitude observer in Proposition \ref{prop:obs1} (obs 1) and \ref{prop:obs2} (obs 2) with Euler angles (Example \ref{example:1})}
\label{fig:1}
\end{figure*}
\begin{figure}[htbp!]
\centering
\includegraphics[width = 0.95\linewidth]{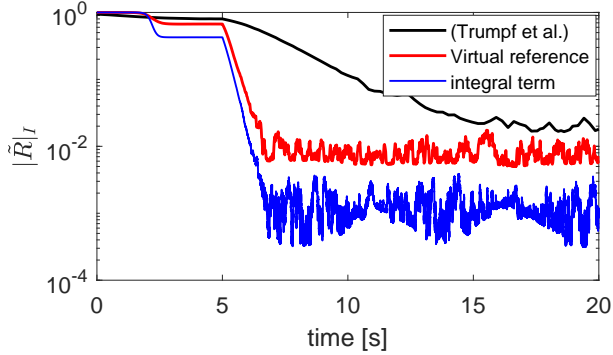}
\caption{Comparison of the norms of estimation errors $|\tilde R|_I$ among the proposed observers and \cite{TRUetal} (Example \ref{example:1})}
\label{fig:3}
\end{figure}

\end{example}
\begin{figure*}[htbp!]
\centering
\includegraphics[width = 0.99\linewidth]{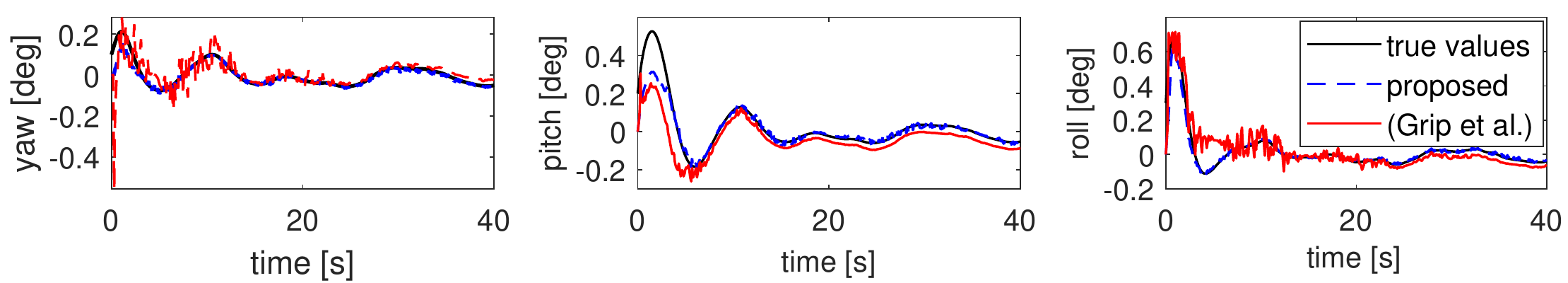}
\caption{Simulation results for helicopter attitude estimation using the observers in Proposition \ref{prop:obs1} and in \cite{GRIetalTAC} (Example \ref{example:2})}
\label{fig:7}
\end{figure*}
\begin{figure}[htbp!]
\centering
\includegraphics[width = 0.9\linewidth]{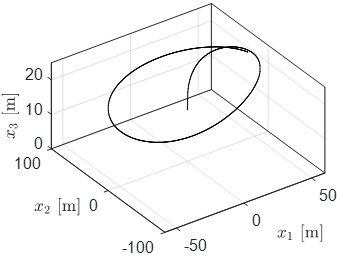}
\caption{The trajectory of the helicopter (Example \ref{example:2})}
\label{fig:8}
\end{figure}
\begin{figure}[htbp!]
\centering
\includegraphics[width = 0.95\linewidth]{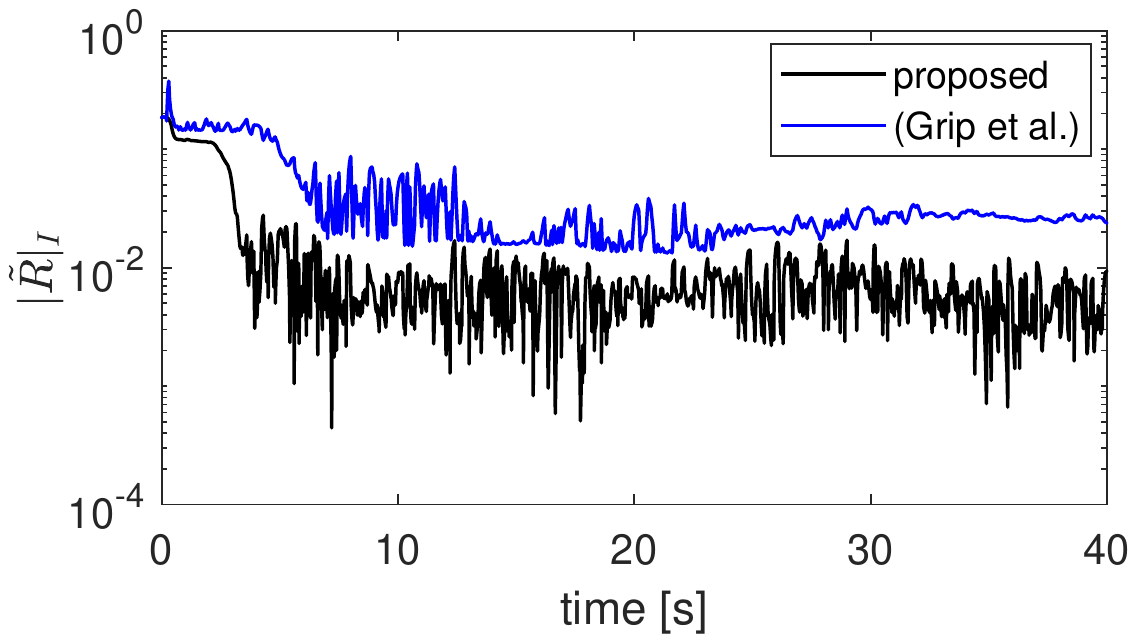}
\caption{Comparison of the norms of estimation errors $|\tilde R|_I$ among the proposed observer and the design in \cite{GRIetalTAC} (Example \ref{example:2})}
\label{fig:3}
\end{figure}

\begin{example}
\label{example:2}
In the second example, we consider the problem set in \cite{GRIetalTAC}, i.e., the vector being the highly time-varying acceleration of a helicopter. To be precise, a remotely controlled helicopter is equipped with accelerometers to detect the acceleration $y_{\tt B} = {}^{\tt B}a$ in the body-fixed frame, and the corresponding inertial acceleration ${}^{\tt I}a$ can be calculated from the GPS velocity $v$ using the relation $\dot v = {}^{\tt I} a$. Clearly, the relation ${}^{\tt B}a = R^\top {}^{\tt I}a$ holds true. Note that the ``dirty derivative'' is usually provided by the filtered approximation $H_1(p)[v]={\alpha p \over \alpha + p }[v]$ with the differential operator $p:={d\over dt}$ and selecting $\alpha>0$ some large parameter. It is widely known that such an operation yields the phase shift. In our problem set, in terms of Proposition \ref{prop:obs2}, we have the identity
$
Q(t) {}^{\tt B}a(t) = Q_c \dot v(t).
$
Applying the filter $H_2(p)[\cdot]={\alpha  \over \alpha + p }[\cdot]$ to the both sides -- thanks to $Q_c$ being constant -- we have
\begin{equation}
\label{filter}
H_2\big[Q {}^{\tt B}a \big] ~=~ Q_c H_1[v].
\end{equation}
By injecting the above into the proposed observer, we are able to overcome the phase-shifting issue caused by the filter $H_1(p)$, and there is no need to require the parameter $\alpha$ large to approximate the time derivative. Instead, a small $\alpha$ makes $H_2(p)$ behave as a low-pass filter, which is capable to attenuate noise significantly.

We consider the helicopter trajectory as shown in Fig. \ref{fig:8}. To make the simulation more realistic, the gyros and the accelerometer provide data at 100 Hz, and the GPS receiver is at 10 Hz -- all with high-frequency noise. Besides, in order to evaluate robustness of the proposed design, we consider acceleration bias from the senor, but do not make any compensation. The first observer design was implemented at 1000 Hz using the solver ``ODE 4 (Runge-Kutta)'' in Matlab/Simulink\textsuperscript{TM}, with $\gamma_{\tt I} =1,\gamma_{\tt P}=5,\alpha=1$, and the initial conditions $Q(0)= \hat Q_c(0) = I_3$. The simulation results are given in Fig. \ref{fig:7}, which illustrates its good robustness, though it brings additional errors in the steady-state stage. We compare it to the design in \cite{GRIetalTAC} using $H_1(p)$ with $\alpha=8$ approximate the differentiator. The phase lag from the filter leads to the offset in estimates observed in Fig. \ref{fig:7}. This effect can be reduced by increasing $\alpha$, but at the expense of higher sensitivity to noise. As discussed above, using \eqref{filter} the proposed design does not suffer from this issue.
\end{example}

%
%%%%%%%%%%%%%%%%%%%%%
\section{Concluding Remarks}
\label{sec7}
%%%%%%%%%%%%%%%%%%%%%
%

In this paper, we studied the observability and observer design for the attitude estimation problem with vectorial measurements. By translating the observation problem into one of on-line parameter identification, we provided the necessary and sufficient condition to the distinguishability for the dynamical model on $SO(3)$, which is complementary to the existing necessary conditions in the literature. As is shown later, though the resulting distinguishability condition is quite weak, we are still able to use it to derive a continuous-time attitude observer with almost global asymptotic stability guaranteed for the single vector case. Finally, simulation results demonstrated accurate estimation performance in the presence of measurement noise. 

%========================================

\appendix

%
%%%%%%%%%%%%
\section*{Proof of Corollary \ref{cor:modified}}
%%%%%%%%%%%%
%
\begin{proof}
Compared to Proposition \ref{prop:cond}, the only difference relies on the
second term, which corresponds to the third case in the proof of Proposition
\ref{prop:cond}. 

The modified condition assumes the existence of two indices $i \in \ell_1, j\in
\ell_2$ such that
$
 |g_i(t_1) \times [\Phi(0,t_2) b_j(t_2)]  |>0.
$
Since $\Phi(0,t_2) b_j(t_2) \in \mathbb{S}^2$, we have
\begequ
\label{gitimes}
 \Big|g_i(t_1) \times \big[R_0\Phi(0,t_2) b_j(t_2) \big]  \Big|>0
\endequ
if $R_0$ is not in an inadmissible initial set
\begequ
\label{cale}
\cale := \{R_0\in SO(3) ~|~ R_0 \mathbf{v} = \pm \mathbf{w} \}
\endequ
with $\mathbf{v} := \Phi(0,t_2) b_j(t_2)$ and $\mathbf{w} = g_i(t_1)$ both in $\mathbb{S}^2$. For a given rotational velocity $\omega$ and the references $b_j, g_i$, two of three Euler angles of the initial rotation matrix $R_0$ is uniquely determined by the equality in \eqref{cale}. Hence, the inadmissible initial set $\cale$ has zero Lebesgue measure in the group $SO(3)$. As a result, we guarantee the condition \eqref{case:3} from the modified assumption almost surely. 
\end{proof}

%
%%%%%%%%%%%%
\section*{Proof of Corollary \ref{cor:subset}}
%%%%%%%%%%%%
%

\begin{proof}
Since those two terms in \eqref{cond:trumpf} are non-negative, is should
satisfy at least one of the following cases:
\begin{align}
\mbox{(i)}& &\lambda_2 \left( \sum_{i\in \ell_1}\int_0^T g_i(s) g_i^\top(s) ds
\right) >0 \label{case:1}
\\
\mbox{(ii)}& &
\left\| 
\int_0^T \sum_{j\in \ell_2} \left(\omega_\times b_j(s) + {d\over ds}b_j(s) 
\right)ds
\right\|>0. \label{case:2}
\end{align}
In the case (i), note that for a single vector $g_i(t)\in \mathbb{S}^2$, the matrix $g_i(t)g_i^\top(t)$ has rank one at any instance $t\ge 0$. Hence, a necessary
condition to \eqref{case:1} is the existence of $t_1,t_2\ge 0$ and $i,l \in
\ell_1$ ($i$ and $l$ may be the same) such that
\begin{equation}
 \lambda_2 \Big(g_i(t_1)g_i^\top(t_1) + g_l(t_2) g_l^\top(t_2) \Big) >0,
\end{equation}
which implies $\sum_{\substack{i,l \in \ell_1}} 
  | g_i(t_1)\times g_l(t_2)| >0$, thus guaranteeing the condition
\eqref{cond:1}.
  
For the case (ii), we consider (piecewisely) smooth outputs $y_{{\tt I},i}$
with $i\in \ell_2$. Its dynamics is given by
\begin{equation}
\label{dot:yi}
\dot{y}_{{\tt I},i} ~ = ~ R (\omega_\times b_i + \dot b_i).
\end{equation}
A necessary condition to \eqref{case:2} is that there exist $j$ and a moment
$t_1>0$ such that 
$$
\omega(t_1)_\times b_j(t_1) + \dot b_j(t_1) \neq 0
\quad \implies \quad
\dot y_{{\tt I},j}(t_1) \neq 0.
$$
Let us select a sufficiently small $\Delta t>0$, and define $t_2:= t_1 + \Delta
t$. It yields
$$
y_{{\tt I},j}(t_2) ~=~ y_{{\tt I},j}(t_1) + \dot y_{{\tt I},j}(t_1) \Delta t +
o(\Delta t^2)
$$
in which $o(\Delta t^2)$ represents the high-order remainder term, with the
\emph{constraint} $y_{{\tt I},j}(t_2) \in \mathbb{S}^2$. Now, we show $y_{{\tt
I},j}(t_1)\times {d\over dt} y_{{\tt I},j}(t_1) \neq 0$ by contradiction. If this cross product
is equal to zero, invoking ${d\over dt}y_{{\tt I},j}(t_1)\neq 0$, we have
$
\dot y_{{\tt I},j}(t_1) = a y_{{\tt I},j}(t_1)
$
for some non-zero $a \in \rea$. Then, we have 
$$
\begin{aligned}
|y_{{\tt I},j}(t_2)| & ~=~ \Big|(1+a\Delta t) y_{{\tt I},j}(t_2) + o (\Delta t^2)
\Big|
\\
& ~=~  |1+a\Delta t| + o (\Delta t^2),
\end{aligned}
$$
which contradicts with the fact $y_{{\tt I},j}(t_2) \in \mathbb{S}^2$. As a consequence,
we obtain that $y_{{\tt I},j}(t_1)\times y_{{\tt I},j}(t_2)\neq 0$. Invoking
the equivalence \eqref{case:2-1}, we have
$$
\eqref{case:2}\quad \implies \quad
\sum_{j,k \in \ell_2}
  \Big| b_j(t_1)_\times \Phi(t_1,t_2) b_k(t_2) \Big| >0,
$$
thus verifying \eqref{cond:1}. It completes the proof.
\end{proof}

\bibliographystyle{abbrv}
\bibliography{reference}

\end{document}